\newcommand{\mathsym}[1]{{}}
\newcommand{\unicode}[1]{{}}
\theoremstyle{plain}
\newtheorem{theorem}{Theorem}
\newtheorem{corollary}[theorem]{Corollary}
\newtheorem{proposition}[theorem]{Proposition}
\theoremstyle{definition}
\theoremstyle{remark}
\newtheorem{remark}[theorem]{Remark}
\newcommand{\+}{\!+\!}
\numberwithin{equation}{section}
\begin{document}


\title[Random matrix ensembles and the structure function]{Differential identities for the structure function of some random matrix ensembles}
\author{Peter J. Forrester}
\address{School of Mathematics and Statistics, 
ARC Centre of Excellence for Mathematical \& Statistical Frontiers,
University of Melbourne, Victoria 3010, Australia}
\email{pjforr@unimelb.edu.au}

\date{\today}


\begin{abstract}
The structure function of a random matrix ensemble can be specified in terms of the covariance of the linear
statistics $\sum_{j=1}^N e^{i k_1 \lambda_j}$, $\sum_{j=1}^N e^{-i k_2 \lambda_j}$ for Hermitian matrices,
and the same with the eigenvalues $\lambda_j$ replaced by the eigenangles $\theta_j$ for unitary matrices. As such it can be written in terms of
the Fourier transform of the density-density correlation $\rho_{(2)}$. For the circular $\beta$-ensemble of unitary matrices,
and with $\beta$ even, we characterise the bulk scaling limit of $\rho_{(2)}$ as the solution of a linear differential equation
of order $\beta + 1$ --- a duality relates $\rho_{(2)}$ with $\beta$ replaced by $4/\beta$ to the same equation. Asymptotics 
obtained in the case $\beta = 6$ from this characterisation are combined with previously established results to determine the explicit form of the
degree 10 palindromic polynomial in $\beta/2$ which determines the coefficient of $|k|^{11}$ in the small $|k|$ expansion of the structure function for general 
$\beta > 0$. For the Gaussian unitary ensemble  we give a reworking of a recent derivation and generalisation,
due to  Okuyama, of an identity relating the structure function to simpler quantities in the Laguerre unitary ensemble first
derived in random matrix theory by Br\'ezin and Hikami. This is used to determine various scaling limits, many of
which relate to the dip-ramp-plateau effect emphasised in recent studies of many body quantum chaos, and allows too
for rates of convergence to be established.

\end{abstract}


\maketitle


\section{Introduction}\label{S1}
\subsection{Context of the structure function --- variance}\label{S1.1}
A task of random matrix theory is to provide precise analytic statements regarding model
spectra for purposes of quantification and applications. The feature of the spectra to be studied in
this work is the structure function, calculated from the Fourier transform of the (truncated
or connected) density-density
correlation, and the model spectra considered are from the Gaussian unitary ensemble (GUE) of complex
Hermitian matrices, and the
circular $\beta$-ensemble of random unitary matrices, which includes as a special case the Haar
distribution on the matrix group $U(N)$.

With $\langle \cdot \rangle$ denoting the ensemble average, 
and $\{ \lambda_j \}_{j=1}^N$ the eigenvalues,
the density-density correlation
$N_{(2)}(\lambda, \lambda')$ is defined by
\begin{equation}\label{1}
N_{(2)}(\lambda , \lambda') = \langle n_{(1)}(\lambda)  n_{(1)}(\lambda') \rangle -
\langle n_{(1)}(\lambda )   \rangle   \langle n_{(1)}(\lambda')   \rangle, 
\end{equation}
where $n_{(1)}(\lambda)$ denotes the microscopic eigenvalue density
\begin{equation}\label{2}
n_{(1)}(\lambda ) = \sum_{j=1}^N \delta (\lambda - \lambda_j).
\end{equation}
Note the decomposition
\begin{align}\label{2a}
 \langle n_{(1)}(\lambda)  n_{(1)}(\lambda') \rangle & = \Big \langle
 \sum_{j,k=1, j \ne k}^N   \delta (\lambda - \lambda_j)  \delta (\lambda' - \lambda_k)  \Big \rangle + \delta (\lambda - \lambda') 
 \Big \langle \sum_{j=1}^N \delta ( \lambda - \lambda_j) \Big \rangle \nonumber \\
 & = \rho_{(2)}(\lambda, \lambda') + \delta(\lambda - \lambda') \rho_{(1)}(\lambda),
 \end{align}
 where $ \rho_{(2)}(\lambda, \lambda')$ denotes the two-point correlation function, and $\rho_{(1)}(\lambda)$ the
 spectral density.
The correlation function (\ref{1}) fully determines the variance of a linear statistic
$A = \sum_{j=1}^N a(\lambda_j)$ of the eigenvalues. Thus
(see e.g.~\cite[Prop.~14.3.2]{Fo10})
\begin{equation}\label{3}
{\rm Var} \, (A) := \langle A^2 \rangle = \int_I d \lambda \int_I d \lambda' \,
 a(\lambda) a(\lambda') N_{(2)}(\lambda , \lambda'),
 \end{equation}
 where $I$ denotes the eigenvalue support.
 
 Generically $N_{(2)}(\lambda , \lambda')$ is a function of two variables, but
 in a translationly invariant state simplifies to depend on only one,
 $N_{(2)}(\lambda , \lambda') = N_{(2)}(\lambda - \lambda',0)$. 
 One setting in which this happens is when the matrices are unitary,
 so the eigenvalues are on the unit circle, and furthermore the eigenvalue density is
 uniform, as is the case for the circular ensembles.
  Another is the bulk scaling limit, when the eigenvalues have been scaled $\lambda =
 s_N x$, so that in the new variables $x$, and after the limit $N \to \infty$
 has been taken, the eigenvalue density is  a fixed
 constant which we will take to be unity. For example, in the GUE 
 this happens with the choice $s_N = \pi/ \sqrt{2N}$
 (see e.g.~\cite[Prop.~7.1.1]{Fo10}), while in the circular ensembles, the
 eigenvalues are first parametrised by their angle $\theta$, then scaling $\theta = s_N x$
 with $s_N = 2 \pi / N$. In this setting
 \begin{equation}\label{4}
 N_{(2), \infty}(x , x') := \lim_{N \to \infty} s_N^2 N_{(2)}(s_N x , s_N x')
  \end{equation}
  is well defined and is translationally invariant $N_{(2), \infty}(x , x') = N_{(2), \infty}(x-x',0)$.
 
The bulk scaled structure function $S(k)$ is defined as the Fourier transform of the scaled
density-density correlation
 \begin{equation}\label{5}
 S(k) := \int_{-\infty}^\infty  N_{(2), \infty}(x , 0) e^{i k x} \, dx.
\end{equation} 
Now replace the linear statistic $A$ by the corresponding scaled quantity
 \begin{equation}\label{5a}
\tilde{A} = \sum_{j=1}^N  a( \lambda_j/ s_N) = \sum_{j=1}^N a(x_j).
\end{equation} 
Changing
variables in this modification of (\ref{3}),
we see by taking the inverse
transform of (\ref{5}) to substitute for the translationally invariant quantity
$N_{(2), \infty}(x , 0)$ that
 \begin{equation}\label{6}
  \lim_{N \to \infty} {\rm Var} \, ( \tilde{A}) = {1 \over 2 \pi} \int_{-\infty}^\infty | \hat{a}(k) |^2 S(k) \, dk, \quad
  \hat{a}(k) :=  \int_{-\infty}^\infty a(x) e^{i k x} \, dx.
  \end{equation}
  
  Universality (see e.g.~\cite{EY17}) tells us that the bulk scaling limit of $ S(k)$ is the
  same for both the GUE and the circular unitary ensemble (CUE) of Haar distributed
  unitary matrices. For these ensembles, labelled by the Dyson index $\beta = 2$,
   \begin{equation}\label{7}
  S(k) \Big |_{\beta = 2} =  \int_{-\infty}^\infty \Big (  - \Big ( {\sin \pi x \over \pi x} \Big )^2  + \delta(x) \Big )  e^{i k x}  \, dx = \left \{ \begin{array}{ll} \displaystyle {  |k| \over 2 \pi}, & |k| < 2 \pi \\[1ex]
  1, & |k| \ge 2 \pi. \end{array} \right.
  \end{equation}
  We remark that the right hand side of (\ref{6}) with the substitution (\ref{7}), as obtained by
  Montgomery \cite{Mo73} in the context of the statistical properties of the zeros of the
  Riemann zeta functions, was what lead Dyson to observe a relationship with GUE (or
  equivalently CUE) eigenvalue statistics; see the account in \cite{Bo05}.
  
  Suppose that associated with the function $a(x)$ defining the linear statistic is a
  length scale $L$, and in fact $a(x) = c(x/L)$ for some function $c(y)$. In this
  circumstance write $\tilde{A} = \tilde{A}_L$. Then a change of variables in (\ref{6}) shows
    \begin{equation}\label{8} 
  \lim_{N \to \infty}   {\rm Var} \, ( \tilde{A}_L) = {L \over 2 \pi} \int_{-\infty}^\infty | \hat{c}(k) |^2 S(k/L) \, dk.
 \end{equation}   
 It had been observed by Dyson (and Mehta) \cite{DM63}, a decade prior to the exchange with
 Montgomery, that the small $k$ form exhibited in  (\ref{7}) implies that  the
 variance formula (\ref{8}) for the bulk scaled GUE has the well defined $L \to \infty$ limit
   \begin{equation}\label{8y} 
  \lim_{L \to \infty}  \lim_{N \to \infty}  {\rm Var} \, ( \tilde{A}_L) \Big |_{\beta = 2} = {1 \over (2 \pi)^2} \int_{-\infty}^\infty | \hat{c}(k) |^2 |k| \, dk,
 \end{equation}  
   assuming $\hat{c}(k) $ decays sufficiently fast. A very similar formula holds true for other symmetry classes,
   distinguished by the Dyson parameter $\beta$. Thus for the bulk scaled limit of the circular
   $\beta$-ensemble, upon the knowledge that (see e.g.~\cite[Eq.~(14.8)]{Fo10})
  \begin{equation}\label{9}   
  S(k) \mathop{\sim}\limits_{|k| \to 0} {|k| \over \pi \beta},
  \end{equation}
  one sees from (\ref{8y}) 
  \begin{equation}\label{10}    
    \lim_{L \to \infty}   \lim_{N \to \infty}  {\rm Var} \, ( \tilde{A}_L)  = {1 \over2 \pi^2  \beta } \int_{-\infty}^\infty | \hat{c}(k) |^2 |k| \, dk.
  \end{equation}
  
  The limit formula (\ref{10}) can be extended to an asymptotic formula. Thus 
  if we set 
\begin{equation}\label{tR2}
f(k;\beta) = {\pi \beta \over |k|} S(k;\beta), \qquad 0 < k < {\rm min} \, (2 \pi, \pi \beta),
\end{equation}
and define $f$ for $k < 0$ by analytic continuation then we know from \cite{FJM00} that
$f(k;\beta)$ is an analytic function of $k$ for $|k| < {\rm min} \, (2 \pi, \pi \beta)$. Writing $S(k) = S(k;\beta)$
and expanding
$S(k;\beta) = \sum_{p=1}^\infty \alpha_p(\beta) |k|^p$ gives the sought extension of (\ref{10}) 
\begin{equation}\label{11} 
  \lim_{N \to \infty} {\rm Var} \, ( \tilde{A}_L) \mathop{\sim}\limits_{L \to \infty}
  \sum_{p=1}^\infty {\alpha_p(\beta) \over L^{p-1}}  \int_{-\infty}^\infty | \hat{c}(k) |^2 |k|^p \, dk.
  \end{equation}
  
  In Section \ref{S2} below we show that consideration of
 integrability features of the two-point
 correlation function $\rho_{(2)}(x,x')$, by way of a particular differential-difference system
 from the theory of the Selberg integral, has consequence for the expansion
  coefficients in (\ref{11}). Generally, integrability is a major theme in random matrix theory,
  with perhaps its  best known manifestation being nonlinear differential equations associated with gap
  probabilities; see e.g.~\cite[Ch.~8 \& 9]{Fo10}.
  In Section \ref{S2} we take up the task of characterising
  \begin{equation}\label{11a}
  \rho_{(2), \infty}(x , x') := \lim_{N \to \infty} s_N^2 \rho_{(2)}(s_N x , s_N x'),
  \end{equation}
  which according to (\ref{2a}) is equal to $ N_{(2), \infty}(x , x') - 1$ for $x \ne x'$, in terms of
  homogeneous {\it linear} differential equations. We do this for the circular $\beta$-ensemble with
  the values $\beta = 2,4,6, 8$ --- see Propositions \ref{p2}, \ref{p4} and Remark \ref{R11a}, with the differential equations
  being of degree $\beta + 1$ ---
  and $\beta = 1,2/3$ --- see Remarks \ref{R3} and \ref{R6a}.
  As an application, we extend knowledge of $\{ \alpha_p(\beta) \}_{p=1}^{10}$ --- which after
  appropriate normalisation are known to be palindromic polynomials in $\beta / 2$ of degree $p-1$ --- in (\ref{11}) from the
  earlier works \cite{FJM00,WF14} to now include $ \alpha_{11}(\beta)$.
  
    \begin{proposition}\label{p5}
Let  $[ k^{10}] h(k)$ denote the coefficient of $k^{10}$ in $h(k)$, and let $x = \beta/2$. We have
 \begin{multline}\label{f9}   
[ k^{10}] {\beta \pi \over |k|} S(k;\beta) = 
\Big ( {1 \over 2 \pi x} \Big )^{10} (x - 1)^2 \\
\times \Big ( x^8-\frac{1523 x^7}{420}+\frac{2529 x^6}{350}-\frac{256189 x^5}{25200}+
\frac{142463 x^4}{12600}-\frac{256189 x^3}{25200}+\frac{2529 x^2}{350}-\frac{1523 x}{420}+1 \Big ).
\end{multline}
\end{proposition}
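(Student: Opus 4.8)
The plan is to obtain the coefficient $\alpha_{11}(\beta)$, equivalently $[k^{10}]\,(\beta\pi/|k|)S(k;\beta)$, by combining the structural information coming from the differential equation characterisation of $\rho_{(2),\infty}$ at $\beta=6$ with the already-known values of $\alpha_{11}$ at other special $\beta$. The key point is that $\beta\pi S(k;\beta)/|k|$, after the normalisation appearing in \eqref{tR2}, is known to be a \emph{palindromic polynomial in $x=\beta/2$ of degree $p-1$} in each coefficient $\alpha_p$; so for $p=11$ we are looking for a degree-$10$ palindromic polynomial, which has only $6$ free coefficients. Moreover the factor $(x-1)^2$ pulled out in \eqref{f9} is forced: at $\beta=2$ (i.e.\ $x=1$) the structure function is exactly the sine-kernel form \eqref{7}, a polynomial of degree $1$ in $|k|$, so all $\alpha_p(\beta)$ with $p\ge2$ must vanish to (at least) second order at $x=1$ once the $(1/(2\pi x))$-type normalisation is stripped off (the double zero being the standard feature of these palindromic polynomials; cf.\ \cite{FJM00,WF14}). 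That leaves a degree-$8$ palindromic polynomial in $x$, i.e.\ $5$ unknowns: the coefficients of $x^8,\dots,x^4$ (with the rest fixed by palindromicity and the normalisation of the leading/constant term to $1$).

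First I would assemble the ``input data'': the exact values of $\alpha_{11}(\beta)$ at $\beta=1,2,4$ (from orthogonal/symplectic and the $\beta=2$ sine kernel, together with the $\beta\leftrightarrow 4/\beta$ duality which relates $\beta=1$ to $\beta=4$ and hence gives two genuinely independent evaluations plus a consistency check), and at $\beta=2/3$ via the duality from $\beta=6$ — wait, more usefully, I would use the low-$\beta$ expansions and the log-gas/loop-equation expansions already recorded in \cite{FJM00,WF14} which directly pin down several of the palindromic coefficients, and then treat $\beta=6$ as supplying the remaining needed constraint(s). Concretely: the small-$|k|$ expansion of $S(k;\beta)$ at $\beta=6$ is extracted from the solution of the order-$7$ linear ODE for $\rho_{(2),\infty}$ established in Section~\ref{S2} (Propositions~\ref{p2},~\ref{p4} and Remark~\ref{R11a}); one solves the ODE as a series, uses the known normalisation/boundary behaviour (leading $|k|/(\pi\beta)$ from \eqref{9}, analyticity from \cite{FJM00}) to fix the arbitrary constants, reads off $[k^{10}]$ of $\beta\pi S(k;6)/|k|$, i.e.\ the value at $x=3$, and also $x=1/3$ via the $\beta\to4/\beta$ duality.

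With enough evaluations of the unknown degree-$8$ palindromic polynomial $q(x)$ (at $x=1/3,1/2,1,2,3$, and possibly the derivative information at $x=1$ coming from the order of vanishing, plus the leading-coefficient normalisation $q$ has constant and top coefficient $1$), I would then simply \emph{interpolate}: set up the linear system expressing $q(x_i)$ in terms of its $5$ unknown coefficients and solve. Palindromicity halves the number of unknowns and is what makes a handful of point evaluations sufficient; the duality $\beta\mapsto4/\beta$ is precisely the symmetry $x\mapsto1/x$ on $q$, so it is automatically consistent with the palindromic ansatz and every dual pair of evaluations gives the same equation — meaning I should be careful to collect evaluations at $x$-values that are \emph{not} related by $x\mapsto1/x$, or else supplement with the Taylor data at $x=1$. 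Finally I would verify the resulting polynomial against any additional known datum not used in the fit (e.g.\ a further evaluation, or the $\beta\to\infty$ / $\beta\to0$ asymptotics of $\alpha_{11}$) to confirm \eqref{f9}.

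The main obstacle I expect is \emph{step two}: correctly extracting the small-$|k|$ series of $S(k;\beta)$ at $\beta=6$ from the order-$7$ linear ODE. The ODE has a $7$-dimensional solution space, and one must identify which solution corresponds to the actual $\rho_{(2),\infty}$ — this means matching the known oscillatory/algebraic decay structure at large $x$ (the analogue of the $-(\sin\pi x/\pi x)^2$ tail), and then Fourier transforming term-by-term to get $S(k)$, keeping track of the distributional $\delta$-contribution and of convergence so that the coefficient of $|k|^{11}$ is genuinely the one appearing in \eqref{11}. Getting the normalisation constants and the contour/regularisation of the Fourier transform exactly right — so that the $x=3$ value of $q$ is the true one and not off by an additive or multiplicative constant — is where the real care is needed; by contrast, once a couple of clean numerical-or-exact evaluations of $q$ are in hand, the palindromic interpolation in paragraph three is routine linear algebra.
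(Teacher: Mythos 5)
Your proposal follows essentially the same route as the paper: take the palindromic degree-$8$ ansatz with the $(x-1)^2$ factor as known from \cite{FJM00,WF15}, pin down coefficients using the exact $\beta\to0$ expansion and the value and $\beta$-derivative at $\beta=4$ from \cite{FJM00}, supply the one genuinely new constraint by reading off the coefficient of $1/X^{12}$ in the non-oscillatory large-$X$ solution of the order-$7$ ODE at $\beta=6$ and converting it to the $x=3$ evaluation via the correspondence between the $1/X^{2n}$ tail and the $|k|^{2n-1}$ coefficients of $S(k)$, and then solve the resulting linear system. You correctly anticipate both the main difficulty (isolating the non-oscillatory solution of the ODE and normalising it) and the pitfall that evaluations at dual points $x$ and $1/x$ are redundant, which is exactly why the paper supplements the $x=2,3$ evaluations with the $\beta\to0$ and $\partial_\beta|_{\beta=4}$ data.
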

  
 \subsection{Context of the structure function --- dip, ramp, plateau}\label{s1.2}  
 It has been commented in the paragraph containing (\ref{4}) that the circular ensembles are translationally
 invariant for finite $N$. This greatly simplifies the analysis of the density-density correlation, which
 otherwise is a function of two variables, and enables our computations of \S \ref{S2} below.
 Universality tells us that the bulk scaling state of the Gaussian $\beta$-ensemble and circular $\beta$-ensemble are the
 same. Starting from finite $N$, to access this limiting state consideration of the circular ensemble is
 simpler. See the recent works \cite{ROK17, ROK20} on the exact calculation of the bulk scaled
 power spectrum for the GUE or CUE for a further example of this strategy.
 
 There are other classes of problems in random matrix theory where the probe is not the local scale of the bulk
 eigenvalues, but rather a global scale where the entirety of the spectrum plays a role. An example is the 
 variance of the linear statistic in the GUE with the scale $s_N = \sqrt{2N}$ in (\ref{5a}) --- the significance
 of this choice is that $|\lambda| < s_N$ is the leading order support of the spectrum. The formula (\ref{6}) with
 $S(k)$ given by (\ref{8}) is now replaced by (see e.g.~\cite{PS11})
   \begin{equation}\label{8a} 
   \lim_{N \to \infty} {\rm Var} \, ( \tilde{A}) =  {1 \over 4 \pi^2} \int_{-1}^1 dx   \int_{-1}^1 dy \, \Big ( {a(x) - a(y) \over x - y } \Big )^2
  {1 - xy \over \sqrt{1 - x^2} \sqrt{1 - y^2}}.
  \end{equation}
  
  Another example where the entire spectrum plays a role is when the density-density correlation is
  averaged,
  $$
  \int_{-\infty}^\infty N_{(2)}(\lambda, \lambda + s) \, d \lambda =: \bar{N}_{(2)}(s),
  $$
  so obtaining a function of a single variable. Notice that
    \begin{align}\label{8b} 
    \bar{S}_N(k) & :=   \int_{-\infty}^\infty  \bar{N}_{(2)}(s) e^{i k s} \, ds =
    \int_{-\infty}^\infty d \lambda \, e^{-i k \lambda}   \int_{-\infty}^\infty d \lambda' \, e^{i k \lambda'} 
  N_{(2)}(\lambda, \lambda') \nonumber \\
  &
  = \Big \langle \Big | \sum_{j=1}^N e^{i k \lambda_j} \Big |^2 \Big \rangle - \bigg |
   \Big \langle\sum_{j=1}^N e^{i k \lambda_j}  \Big \rangle \bigg |^2.
  \end{align}   
  In the case of the GUE  the first of the averages on the RHS of (\ref{8b}),
  as a stand alone quantity, has received
 recent attention in the context of the scrambling of information in
  black holes \cite{C+17,CMS17}, and 
  as a probe of many body quantum chaos
  \cite{CHLY17,TGS18, CMC19,CH19}. Prominent in these studies is the graphical shape, exhibiting a dip, ramp and
  then plateau, referred to as a ``correlation hole'' in earlier literature \cite{LLJP86}.
  
  Many years before the prominence given to $ \bar{S}_N(k) $, and associated averages, 
   for the GUE
  by these new fields of application, Br\'ezin and Hikami \cite{BH97} had obtained the striking identity
   \begin{equation}\label{bhX}
     \bar{S}_N^{(G)}(k) =
      \int_0^k  t K_N^{(L)}(t^2/2,t^2/2) \Big |_{a=0} \, dt.
  \end{equation}
  Here and below the use of the superscript $(G)$ on the LHS indicates the quantity is with respect to the
  GUE, while on the RHS $K_N^{(L)}$ denotes the correlation kernel for the Laguerre unitary ensemble (LUE) ---
  see (\ref{kp1x}) and  (\ref{kp1y}) below for its specification. It is
  clear  upon differentiating both sides that differential identities --- the theme of Section \ref{S1.1} --- play a role in the underlying
  theory. Inspired  by the recent derivation of (\ref{bh}) given by Okuyama \cite{Ok19}, we
  bring these structures to the fore.  One consequence is a generalisation of (\ref{bhX}), relating to a particular
  covariance, defined for general  linear statistics $A,B$, by
   \begin{equation}\label{Cov}
  {\rm Cov} \,(A,B) = \langle (A - \langle A \rangle) (B - \langle B \rangle) \rangle.
    \end{equation} 
  
  \begin{proposition}\label{p11}
 Define
  \begin{equation}\label{H}
  H^{(L)}(t_1,t_2) = {t_1 + t_2 \over 2} K_N^{(L)}(t_1^2/2, t_2^2/2) \Big |_{a=0}.
  \end{equation}
  We have
   \begin{equation}\label{H1X} 
   {\rm Cov } \, \Big ( \sum_{j=1}^N e^{i k_1 \lambda_j},   \sum_{j=1}^N e^{-i k_2 \lambda_j} \Big )^{(G)}
   =
   \int_0^{k_2} H^{(L)}(k_1 - k_2 +s,  s) \, ds.
  \end{equation}
  \end{proposition}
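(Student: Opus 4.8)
The plan is to derive the generalised identity \eqref{H1X} from first principles in the LUE, rather than from \eqref{bhX} directly, following the spirit of Okuyama's approach. The starting point is the determinantal structure of both ensembles. For the GUE the two-point function in \eqref{2a} is $\rho_{(2)}(\lambda,\lambda') = \rho_{(1)}(\lambda)\rho_{(1)}(\lambda') - K_N^{(G)}(\lambda,\lambda')^2$, so by \eqref{8b} with distinct frequencies $k_1,-k_2$ the covariance on the left of \eqref{H1X} becomes the double integral
\begin{equation*}
-\int_{-\infty}^\infty d\lambda \int_{-\infty}^\infty d\lambda'\, e^{ik_1\lambda}e^{-ik_2\lambda'}\, K_N^{(G)}(\lambda,\lambda')^2,
\end{equation*}
which via the Christoffel--Darboux-type expansion $K_N^{(G)}(\lambda,\lambda')^2 = \big(\sum_{j=0}^{N-1}\phi_j(\lambda)\phi_j(\lambda')\big)^2$ in Hermite functions $\phi_j$ reduces to a sum over $j,l$ of products of single Gaussian-weighted Fourier integrals $\int e^{ik\lambda}\phi_j(\lambda)\phi_l(\lambda)\,d\lambda$. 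The key classical fact is that such an integral is, up to Gaussian and power-of-$k$ prefactors, a Laguerre polynomial in $k^2/2$ — this is exactly the mechanism by which the LUE kernel $K_N^{(L)}$ at argument $t^2/2$ appears. First I would assemble these Fourier–Hermite evaluations carefully, keeping track of the precise normalisations, so that the double sum collapses into an expression in $K_N^{(L)}(k_1^2/2,\cdot)$-type objects.

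Next I would perform the change of variables that turns the collapsed double sum into the single integral on the right of \eqref{H1X}. The natural step is to write the product of the two Fourier integrals as an integral over an auxiliary variable: using a convolution/Gaussian-integral identity of the form $e^{-k_1^2/4}e^{-k_2^2/4}(\cdots) = \int (\cdots)\,ds$, one introduces the integration variable $s$ running over $[0,k_2]$ and the shifted argument $k_1-k_2+s$. Concretely, I expect the two separate Laguerre-polynomial factors (one in $k_1^2/2$, one in $k_2^2/2$) to recombine, after summation over the degree index, into the single CD kernel $K_N^{(L)}$ evaluated at the pair $\big((k_1-k_2+s)^2/2,\,s^2/2\big)$, with the prefactor $\tfrac{(k_1-k_2+s)+s}{2}$ emerging from the Jacobian/weight bookkeeping; this prefactor is precisely $H^{(L)}$ as defined in \eqref{H}. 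Setting $k_1=k_2=k$ should collapse \eqref{H1X} back to \eqref{bhX} as a consistency check, since then $H^{(L)}(s,s) = s\,K_N^{(L)}(s^2/2,s^2/2)|_{a=0}$ and the integral is $\int_0^k s K_N^{(L)}(s^2/2,s^2/2)\,ds$.

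An alternative, and perhaps cleaner, route is to differentiate the proposed identity in $k_2$ (and/or $k_1$), reducing \eqref{H1X} to an identity between the kernel-level quantities: $\partial_{k_2}$ of the right side is $H^{(L)}(k_1,0) - $ (a term from the $s$-dependence of the integrand), and $\partial_{k_2}$ of the left side is a single Fourier transform of $K_N^{(G)}$ against $e^{-ik_2\lambda}$; matching these amounts to the Br\'ezin--Hikami mechanism at the level of kernels, namely the identity expressing the Fourier transform of the GUE kernel (squared, integrated against one plane wave) as a Hankel-type transform producing the LUE kernel. One then integrates in $k_2$ from $0$, noting that both sides vanish at $k_2=0$ because the covariance of anything with the constant $N$ is zero. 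I would use whichever of these two presentations keeps the normalisation constants most transparent.

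The main obstacle I anticipate is the precise bookkeeping of normalisation constants and the $a=0$ specialisation of the LUE kernel: the Hermite-to-Laguerre Fourier identity introduces factors of $2$, $\pi$, factorials and Gaussian exponentials at several stages, and the CD resummation of the double Hermite sum into $K_N^{(L)}$ requires these to cancel exactly against the Jacobian of the $s$-substitution for the shifted argument $(k_1-k_2+s)^2/2$ to be the correct one. Getting the shift right — that it is $k_1-k_2+s$ paired with $s$, and not some other affine combination — is the delicate point, and I expect to pin it down by the $k_1=k_2$ consistency check together with a direct small-$N$ verification (e.g. $N=1$, where both sides are elementary Gaussians), which will fix any remaining sign or constant ambiguity.
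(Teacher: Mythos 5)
There is a genuine gap, in two places. First, your opening reduction of the covariance is wrong: the covariance of $\sum_j e^{ik_1\lambda_j}$ and $\sum_j e^{-ik_2\lambda_j}$ is \emph{not} just $-\int\int e^{ik_1\lambda}e^{-ik_2\lambda'}\bigl(K_N^{(G)}(\lambda,\lambda')\bigr)^2\,d\lambda\,d\lambda'$. The double sum over eigenvalues contains the diagonal ($j=l$) self-correlation, which contributes $\int e^{i(k_1-k_2)\lambda}\rho_{(1)}(\lambda)\,d\lambda=\int\int e^{i(k_1-k_2)\lambda}\bigl(K_N^{(G)}(\lambda,\lambda')\bigr)^2\,d\lambda\,d\lambda'$ (cf.\ \eqref{2a} and the paper's \eqref{3.23a}). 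This term is not a decoration: in the paper's proof it is precisely the piece that, after evaluating both double integrals by \eqref{H1} and shifting $s\mapsto s+k_2$ in one of them, converts the semi-infinite integral $\int_0^\infty H^{(L)}(k_1+s,k_2+s)\,ds$ into the finite integral $\int_0^{k_2}H^{(L)}(k_1-k_2+s,s)\,ds$. Dropping it makes the identity false, as your own sanity check at $k_2=0$ would reveal: the covariance vanishes there, but your single double-integral term reduces to $-\int e^{ik_1\lambda}K_N^{(G)}(\lambda,\lambda)\,d\lambda\neq 0$.

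Second, the analytic core of your plan --- that the double Hermite sum $\sum_{j,l}$ of products of Fourier--Hermite integrals ``recombines, after summation over the degree index, into the single CD kernel $K_N^{(L)}$'' under an $s$-integral --- is asserted rather than argued, and it is not a routine resummation: by \eqref{IGa} you get a double sum of Laguerre polynomials $L_l^{\,j-l}$ with \emph{varying} superscripts, and no convolution identity you name produces the shifted pair $\bigl((k_1-k_2+s)^2/2,\,s^2/2\bigr)$ from it. The paper's route is different and is the missing idea here: one first establishes the differential identity \eqref{3.19}, by applying $(t_1-t_2)(\partial_{t_1}+\partial_{t_2})$ to the double Fourier transform of $\bigl(K_N^{(G)}\bigr)^2$, integrating by parts, and invoking the kernel derivative identity \eqref{A1a} together with Proposition \ref{p10}; only the two extreme indices $N,N-1$ survive, and the resulting product of Laguerre polynomials assembles into $K_N^{(L)}$ via Christoffel--Darboux. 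One then integrates this ODE along the diagonal direction $t_1=k_1+s$, $t_2=k_2+s$ to get \eqref{H1}. Your ``alternative route'' by differentiating in $k_2$ gestures at this but again presupposes the kernel-level identity instead of deriving it (and note $\partial_{k_2}$ of the right-hand side produces $H^{(L)}(k_1,k_2)$ from the endpoint, not $H^{(L)}(k_1,0)$). To repair the proposal you need both the corrected covariance decomposition \eqref{3.23a} and the differential identity \eqref{3.19} as an explicit lemma.
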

  
  We show how the formulas (\ref{bhX}) and (\ref{H1X}) can be used in the derivation of several limit
  theorems, and moreover in some cases allow for precise determination of the rate of convergence.

\section{Differential equations for the bulk two-point correlation function}\label{S2}

\subsection{The method}\label{s2.1}
The circular $\beta$-ensemble is specified by the eigenvalue probability density function
\begin{equation}\label{C1}
p^{\rm C}(\theta_1,\dots,\theta_N) := {1 \over \mathcal N_{N,\beta}^{\rm C}} \prod_{1 \le j < k \le N} | e^{i \theta_k} - e^{i \theta_j} |^\beta,
\end{equation}
supported on $0 \le \theta_l < 2 \pi$ $(l=1,\dots,N)$. Here $ \mathcal N_{N,\beta}^{\rm C} = (2 \pi)^N
\Gamma(\beta N/2 + 1)/ (\Gamma(\beta/2 + 1))^N$ is the normalisation; see e.g.~\cite[Prop.~4.7.2]{Fo10}
for its derivation. Let $\rho_{(2),N}^{\rm C}(\theta_1,\theta_2)$ denote the corresponding two-point correlation function.
This is defined by integrating over each of $\theta_3,\dots, \theta_N$, and multiplying by $N(N-1)$ as a normalisation.
Thus
\begin{equation}\label{C1a}
\rho_{(2),N}^{\rm C}(\theta_1,\theta_2) = {N(N-1) \over N_{N,\beta}^{\rm C}} \int_0^{2 \pi} d \theta_3 \cdots \int_0^{2 \pi} d \theta_N \,
p^{\rm C}(\theta_1,\dots,\theta_N).
\end{equation}
Due to (\ref{C1}) being invariant under rotation, i.e.~the replacement $\theta_l \mapsto \theta_l + \phi$, $(l=1,\dots,N)$, one sees that
\begin{equation}\label{C1b}
\rho_{(2),N}^{\rm C}(\theta_1,\theta_2) = \rho_{(2),N}^{\rm C}(\theta_1 - \theta_2, 0),
 \end{equation}
 and thus is a function of a single variable only.
 
 In the cases $\beta$ even (\ref{C1a}) exhibits a (Laurent) polynomial structure (in $e^{i \theta}$),
 \begin{equation}\label{C1c}
 \rho_{(2),N}^{\rm C}(\theta, 0) = \sum_{p=-\beta(N-1)/2}^{\beta (N-1)/2} \alpha_p(N,\beta) e^{i p \theta},
  \end{equation}
  where $\alpha_{-p}(N,\beta)  = \overline{ \alpha_p(N,\beta)}$.
  In an earlier work \cite{Fo93}, a differential-difference system was used to specify a recursive
  computational scheme for $\{  \alpha_p(N,\beta) \}$. This differential-difference system is part of the broader
  theory of the Selberg integral (see e.g.~\cite[Ch.~4]{Fo10}). Development of this theory by Kaneko
  \cite{Ka93} was applied in \cite{Fo94j} to deduce that for $\beta$ even the $(N-2)$-dimensional integral
  (\ref{C1}) permits a transformation to a $\beta$-dimensional integral,
   \begin{multline}\label{C1d}
   \rho_{(2),N}^{\rm C}(\theta, 0) = C_{N,\beta} |1 - e^{i \theta} |^\beta e^{-i \beta (N-2) \theta / 2} \\
    \times
   \int_{[0,1]^\beta} dx_1 \cdots dx_\beta \,
   \prod_{j=1}^\beta \Big ( x_j (1 - x_j) \Big )^{2/\beta - 1} \Big ( 1 - (1 - e^{i \theta}) x_j \Big )^{N-2}
   \prod_{1 \le j < k \le \beta} | x_k - x_j |^{4/\beta}.
   \end{multline} 
 The constant $ C_{N,\beta}$ is known explicitly, but is not needed in our subsequent working.
 
 Like the $(N-2)$-dimensional integral
  (\ref{C1}), the integral in (\ref{C1d}) also relates to a particular differential-difference system from the theory of the
  Selberg integral. We know from \cite{FR12} that the later is equivalent to a matrix differential equation of
  dimension $\beta + 1$; the differential-difference system applying directly to the integral  (\ref{C1}) is
  equivalent to a matrix differential equation of
  dimension $N - 1$. The advantage of the former is that, in theory at least, it allows the characterisation of the bulk scaled
  $N \to \infty$ limit of $  \rho_{(2),N}^{\rm C}(\theta, 0)$ in terms of a $(\beta + 1)$-dimensional scalar differential equation.
  Experience in the explicit determination of such differential equations for the eigenvalue density of the Gaussian
  $\beta$-ensemble \cite{RF19} shows that the practical implementation of the reduction from the matrix to the scalar
  differential equation requires computer algebra and involves cumbersome expressions for the coefficients,
  which however simplify in the scaling limit. In \cite{RF19}, this restricted the largest value of $\beta$ to be 
  considered to the value 6, corresponding to a 7th order differential equation. Here, by not seeking the form of the
  differential equation for finite $N$, but rather simplifying by taking the bulk scaled $N \to \infty$ limit using computer
  algebra, manageable expressions up to the value $\beta = 10$ can be obtained. However, for purposes of our application
  in deriving Proposition \ref{p5}, we don't go beyond $\beta = 8$.

  The differential-difference system from \cite{Fo93} relevant to (\ref{C1d}) applies to the family of
  multiple integrals
  \begin{multline}\label{5.1}
J_{p,n}^{(\alpha)}(z) = {1 \over C_p^n}\int_0^1 dx_1 \cdots  \int_0^1 dx_n \,
\prod_{l=1}^n x_l^{\lambda_1} (1 - x_l)^{\lambda_2} (z - x_l)^\alpha \\
\times \prod_{1 \le j < k \le n} | x_k - x_j|^{2 \kappa}
e_p(z - x_1,\dots, z - x_n),
\end{multline}
 where $e_p(y_1,\dots,y_n) $ denote the elementary symmetric polynomials in $\{ y_j \}_{j=1}^m$, and
$C_p^n$ denotes the binomial coefficient.
 This family of multiple integrals satisfies the
differential-difference system \cite{Fo93}, \cite[\S 4.6.4]{Fo10} (note that
these references have $\alpha$ replaced by $\alpha - 1$ relative to our (\ref{5.1})),
later observed to be equivalent to a certain Fuchsian matrix differential equation \cite{FR12},
\begin{equation}\label{5.1a}
(n - p) E_p J_{p+1}(z) 
= (A_p z + B_p) J_p(z) - z(z-1) {d \over dz} J_p(z) + D_p z ( z - 1) J_{p-1}(z),
\end{equation}
valid for $p=0,1,\dots, n $, where we have abbreviated $J_{p,n}^{(\alpha)}(z) =: J_p(z)$, and
\begin{align*}
A_p & = (n-p) \Big ( \lambda_1 + \lambda_2 + 2 \kappa  (n - p - 1) + 2(\alpha + 1) \Big ) \\
B_p & = (p-n)  \Big ( \lambda_1 + \alpha + 1 + \kappa (n - p - 1) \Big ) \\
D_p & = p \Big ( \kappa (n-p) + \alpha + 1 \Big ) \\
E_p & = \lambda_1 + \lambda_2 + 1 + \kappa (2n - p - 2) + (\alpha + 1).
\end{align*}
Recent applications of (\ref{5.1a}) and various scaling forms (for example $z \mapsto z/\lambda_2$ and
$\lambda_2 \to \infty$ which induces the Jacobi to Laguerre weight limit $x^{\lambda_1} (1 - x/\lambda_2)^{\lambda_2} \to
x^{\lambda_1} e^{- x}$) for purposes of exact finite $N$ computations can be found in \cite{FT18,Ku19,FT19,FK19}.

To be directly applicable to (\ref{C1d}) we consider the modifications of (\ref{5.1}) specified by
  \begin{multline}\label{5.1y}
K_{p,n}^{(\alpha)}(z) = {1 \over C_p^n} z^{- ( \alpha n +  \beta) /2} (z - 1)^\beta \int_0^1 dx_1 \cdots  \int_0^1 dx_n \,
\prod_{l=1}^n x_l^{\lambda_1} (1 - x_l)^{\lambda_2} (1 - (1 - z) x_l)^\alpha \\
\times \prod_{1 \le j < k \le n} | x_k - x_j|^{2 \kappa}
e_p(1 - (1 - z) x_1,\dots, 1 - (1 - z) x_n).
\end{multline}

  \begin{corollary}
 The family of multiple integrals (\ref{5.1y}) satisfy the differential-difference equation
 \begin{equation}\label{5.1f}
 (n - p) E_p K_{p+1}(z) 
= ( \tilde{A}_p  - z \tilde{B}_p  ) K_p(z) 
- z (1 - z) {d \over dz} K_p(z) + z D_p K_{p-1} (z),
\end{equation}
valid for $p=0,1,\dots, n $, where
\begin{equation}
\tilde{A}_p =  A_p + B_p - {\alpha n + \beta \over 2} , \qquad
\tilde{B}_p = B_p + p + {\alpha n + \beta \over 2}.
\end{equation}
 \end{corollary}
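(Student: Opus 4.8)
The plan is to derive (\ref{5.1f}) from the already-established system (\ref{5.1a}) for $J_{p,n}^{(\alpha)}(z)$ by an explicit change of independent variable combined with a multiplicative gauge transformation, so that no new integration by parts is required and the range $p=0,1,\dots,n$ (together with the vanishing of $D_0$ and of the $K_{p+1}$ coefficient at $p=n$) is inherited directly. The first step is to pin down the exact relation between $K_{p,n}^{(\alpha)}$ and $J_{p,n}^{(\alpha)}$. The key observation is that the M\"obius substitution $w = 1/(1-z)$ converts the factor $(z-x_l)^\alpha$ in (\ref{5.1}) into the factor $(1-(1-z)x_l)^\alpha$ in (\ref{5.1y}), up to an overall power of $w$: writing $1-(1-z)x_l = (w-x_l)/w$ and using the homogeneity of degree $p$ of the elementary symmetric polynomial, $e_p\big((w-x_1)/w,\dots,(w-x_n)/w\big) = w^{-p}\,e_p(w-x_1,\dots,w-x_n)$, one reads off
\[
K_{p,n}^{(\alpha)}(z) = z^{-(\alpha n+\beta)/2}\,(z-1)^\beta\,(1-z)^{\alpha n + p}\,J_{p,n}^{(\alpha)}\!\Big(\tfrac{1}{1-z}\Big).
\]
It is convenient to set $\mu(z) := (1-z)^{-1}$ and $\Phi(z) := z^{(\alpha n+\beta)/2}(z-1)^{-\beta}(1-z)^{-\alpha n}$, so that $J_p(w) = \Phi(z)\,\mu(z)^p\,K_p(z)$ with $w = 1/(1-z)$ (since $\beta$ is even one may freely write $(z-1)^\beta = (1-z)^\beta$, and an overall constant is in any case irrelevant to the difference equation). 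I would flag here that the gauge factor carries the $p$-dependent power $\mu^p$; this is precisely what generates the shifts distinguishing $\tilde A_p,\tilde B_p$ from $A_p,B_p$.

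The second step is to substitute $J_q(w) = \Phi\mu^q K_q$ for $q = p-1,p,p+1$ into (\ref{5.1a}) written in the variable $w$, and to translate each term into $z$. The only identities needed are $w-1 = z/(1-z) = zw$ (so $w(w-1) = zw^2$); $\mathrm dw/\mathrm dz = (1-z)^{-2} = w^2$, hence $\mathrm d/\mathrm dw = w^{-2}\,\mathrm d/\mathrm dz$; $\mu'(z)/\mu(z) = \mu(z)$; and $\Phi'(z)/\Phi(z) = \tfrac{\alpha n+\beta}{2}\big(\tfrac1z + \tfrac2{1-z}\big)$. With these, the derivative term becomes $-w(w-1)\,\tfrac{\mathrm d}{\mathrm dw}J_p = -z\,\tfrac{\mathrm d}{\mathrm dz}\big(\Phi\mu^p K_p\big) = -z\big[\,\Phi'\mu^p K_p + p\Phi\mu^{p-1}\mu' K_p + \Phi\mu^p K_p'\,\big]$, the term $(A_p w + B_p)J_p(w)$ becomes $(A_p\mu + B_p)\Phi\mu^p K_p$, and $D_p w(w-1)J_{p-1}(w)$ becomes $D_p\,zw^2\,\Phi\mu^{p-1}K_{p-1} = D_p\,z\,\Phi\mu^{p+1}K_{p-1}$. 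Dividing the whole identity by the common factor $\Phi(z)\mu(z)^p$ and then multiplying through by $(1-z) = 1/\mu$ clears every explicit $\mu$ from the $K_{p\pm1}$ terms: the left side becomes $(n-p)E_p K_{p+1}(z)$, the $K_{p-1}$ term becomes $zD_p K_{p-1}(z)$, and the derivative becomes $-z(1-z)\,\tfrac{\mathrm d}{\mathrm dz}K_p(z)$ --- matching (\ref{5.1f}) term by term apart from the coefficient of $K_p(z)$.

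The final step is to verify that this leftover coefficient, which the above yields as $(1-z)(A_p\mu + B_p) - z(1-z)\big(\Phi'/\Phi + p\mu\big)$, collapses to $\tilde A_p - z\tilde B_p$. Using $(1-z)\mu = 1$, $z(1-z)\mu = z$ and $z(1-z)\,\Phi'/\Phi = \tfrac{\alpha n + \beta}{2}(1+z)$, it becomes $\big(A_p + B_p - \tfrac{\alpha n+\beta}{2}\big) - z\big(B_p + p + \tfrac{\alpha n+\beta}{2}\big)$, which is exactly $\tilde A_p - z\tilde B_p$ for the stated $\tilde A_p = A_p + B_p - (\alpha n+\beta)/2$, $\tilde B_p = B_p + p + (\alpha n+\beta)/2$. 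No step is deep; the one place that demands care --- and hence the main obstacle --- is carrying the $p$-dependent factor $\mu^p$ of the gauge transformation consistently through both the plain terms and the differentiated term, and combining it with the logarithmic derivative of $\Phi$, since it is precisely these two contributions that feed the $+p$ into $\tilde B_p$ and the $\mp(\alpha n+\beta)/2$ shifts into $\tilde A_p$ and $\tilde B_p$.
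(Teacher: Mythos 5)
Your proposal is correct and follows essentially the same route as the paper: the paper likewise composes the M\"obius substitution $z\mapsto 1/(1-z)$ with the $p$-dependent prefactor $(1-z)^{\alpha n+p}$ (its intermediate object $\tilde J_{p,n}^{(\alpha)}$) and then the $p$-independent factor $z^{-(\alpha n+\beta)/2}(z-1)^\beta$, merely splitting into two stages what you carry out in a single gauge transformation $J_p=\Phi\mu^pK_p$. Your bookkeeping of the $\mu^p$ and $\Phi'/\Phi$ contributions, which produce the shifts in $\tilde A_p$ and $\tilde B_p$, checks out.
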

 
 \begin{proof}
 We first introduce the transformation of the multiple integrals (\ref{5.1})
 \begin{align}\label{5.1g}
 \tilde{J}_{p,n}^{(\alpha)}(z) & = z^{\alpha n + p} {J}_{p,n}^{(\alpha)} \Big ( {1 \over z} \Big ) \Big |_{z \mapsto 1 - z} \nonumber \\
& =  {1 \over C_p^n}\int_0^1 dx_1 \cdots  \int_0^1 dx_n \,
\prod_{l=1}^n x_l^{\lambda_1} (1 - x_l)^{\lambda_2} (1 - (1 - z) x_l )^\alpha  \nonumber \\
& \qquad \times \prod_{1 \le j < k \le n} | x_k - x_j|^{2 \kappa}
e_p(1 - (1-z) x_1,\dots, 1 - (1-z)x_n).
\end{align}
Substitution into (\ref{5.1a}) shows
\begin{multline}\label{5.1b}
(n-p) E_p  \tilde{J}_{p+1}(z) = \Big ( (A_p + B_p ( 1 - z)) - z ( \alpha n + p) \Big )  \tilde{J}_{p}(z)  \\
- z (1 - z) {d \over d z} \tilde{J}_{p}(z)
 + z D_p  \tilde{J}_{p-1}(z).
 \end{multline}
 Comparing the multiple integral in (\ref{5.1g}) to (\ref{5.1y}) we see that
 $$
K_{p,n}^{(\alpha)}(z) =   z^{- ( \alpha n +  \beta) /2} (z - 1)^\beta   \tilde{J}_{p,n}^{(\alpha)}(z).
$$
Substituting this in (\ref{5.1b}) gives the further transformed recurrence (\ref{5.1f}).
  \end{proof}
 
Introduce the $(n+1) \times (n+1)$ tridiagonal matrix
$$
X = {\rm diag} \, [\tilde{A}_p  - z \tilde{B}_p ]_{p=0}^n -  {\rm diag}^+ \, [(n-p) E_p ]_{p=0}^{n-1}  +   {\rm diag}^- \, [z D_p]_{p=1}^n,
$$
where $ {\rm diag}^+$ refers to the first diagonal above the main diagonal, and 
${\rm diag}^-$ refers to the first diagonal below the main diagonal.
Introduce too the vector of multiple integrals $\mathbf v = [ K_{p,n}^{(\alpha)}(z) ]_{p=0}^n$.
We see immediately that the recurrences (\ref{5.1f}) are equivalent to the matrix differential
equation
\begin{equation}\label{5.1h}
z (1 - z) {d \over d z} \mathbf v = X \, \mathbf  v.
\end{equation}
Moreover, the tridiagonal structure allows for a straightforward reduction from the first order
matrix differential equation for the vector $\mathbf v$ to a scalar differential equation of order
$n+1$ for its first component $K_{0,n}^{(\alpha)}(z) =: K_0(z)$.

To see this, note that the first row in (\ref{5.1h}), or equivalently the recurrence
(\ref{5.1f}) with $p=0$, tells us that
\begin{equation}\label{K1}
n E_0 K_1(z) = (\tilde{A}_0 - z \tilde{B}_0 ) K_0(z) - \delta_z K_0(z), \quad \delta_z := z (1 - z) {d \over d z},
\end{equation}
hence expressing $K_1(z)$ in terms of $K_0(z)$ and its derivative. The recurrence
(\ref{5.1f} with $p=1$ tells us that
\begin{equation}\label{K2}
(n-1) E_1 K_2(z) = (\tilde{A}_1 - z \tilde{B}_1 ) K_1(z) - \delta_z K_1(z) + z D_1 K_0(z).
\end{equation}
Substituting (\ref{K1}) in (\ref{K2}) gives an expression for $K_2(z)$ in terms of $K_0(z)$ and
its first two derivatives. Continuing this procedure, an expression for $K_p(z)$ in terms of
$\{ {d^k \over d z^k} K_0(z) \}_{k=0}^p$ can be obtained for each $p=1,2,\dots,n$.
Substituting the cases $p=n$ and $p=n-1$ into the recurrence (\ref{5.1f}) with $p=n$,
$$
0 = (\tilde{A}_n - z \tilde{B}_n) K_n(z) - z (1 - z) {d \over dz} K_n(z) + z D_n K_{n-1}(z),
$$
gives the sought scalar differential equation for $K_0(z)$, which is seen to be of order $n+1$.

We see that $K_{p,n}^{(\alpha)}(z)$ as specified by (\ref{5.1y}) is proportional to $\rho_{(2),N}(\theta,0)$
as specified by (\ref{C1d}) if we set
$$
p=0, \quad \alpha = N-2, \quad n = \beta, \quad \lambda_1 = \lambda_2 = {2 \over \beta} - 1, \quad \kappa =  2/\beta, \quad z = e^{i \theta}.
$$
Since $\beta$ is required to be even, the lowest order differential equation we can obtain
for $\rho_{(2),N}(\theta,0)$ occurs when $\beta = 2$, and is order 3. However, already for
this simplest case the dependence on $N$ in the coefficients makes for a cumbersome
expression. On the other hand, our primary interest in $\rho_{(2),N}^{\rm (C)}(\theta, 0)$ is
not its dependence on $N$, but rather its bulk scaling limit
\begin{equation}\label{K3}
\rho_{(2),\infty}^{\rm (C)}(X, 0) = \lim_{N \to \infty} \Big ( {2 \pi \over N} \Big )^2 
\rho_{(2),N}^{\rm (C)}\Big ( 2 \pi  X / N, 0 \Big ).
\end{equation}
The existence of this limit, and moreover its explicit form as a $\beta$-dimensional integral,
follows immediately from (\ref{C1d})  \cite{Fo94j}, \cite[Eq.~(13.35)]{Fo10}; in fact starting with
 (\ref{C1d}) it is possible to show that the rate
of convergence to the limit is O$(1/N^2)$ \cite{FLT20}.
The significance of the existence of the limit, and
moreover with corrections that are inverse powers of $N$, is that the differential operator
characterising $\rho_{(2),N}(\theta,0)$ can similarly be expanded, with the leading order
giving the differential operator for $\rho_{(2),\infty}^{\rm (C)}(X, 0) $. 

\subsection{The cases $\beta = 2$ and 4}
Through the use of
computer algebra, we first carry out the above procedure 
for $\beta = 2$ and 4, when the functional form is already known in terms of elementary functions.

\begin{proposition}\label{p2}
For $\beta = 2$, $\rho_{(2),\infty}^{\rm (C)}(X, 0) $ satisfies the 3rd order differential equation
\begin{equation}\label{K4a}
X^3   f^{(3)}(X)  + 4 X^2   f^{(2)}(X)  + \Big ( (2 \pi X )^2 - 2 \Big ) X f'(X)
- 4  f(X) = 0,
\end{equation}
while for $\beta = 4$, $\rho_{(2),\infty}^{\rm (C)}(X, 0) $ satisfies the 5th order  differential equation
\begin{multline}\label{K4b}
X^5 f^{(5)}(X) + 10 X^4 f^{(4)}(X)  + (20 \pi^2 X^5 + 12 X^3 )  f^{(3)}(X) + (64 \pi^2 X^4 - 40 X^2 )  f^{(2)}(X) \\
 + (64 \pi^4 X^5 - 48 \pi^2 X^3 - 16 X) f'(X) 
+ (-32 \pi^2 X^2 + 16) f(X) = 0.
\end{multline}
\end{proposition}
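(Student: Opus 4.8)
The plan is to specialise the reduction of \S\ref{s2.1} to the parameter values forced by (\ref{C1d}): $n=\beta$, $p=0$, $\alpha=N-2$, $\lambda_1=\lambda_2=2/\beta-1$, $\kappa=2/\beta$, $z=e^{i\theta}$, so that $K_{0,\beta}^{(N-2)}(e^{i\theta})$ is a constant (independent of $\theta$, but depending on $N$) times $\rho_{(2),N}^{\rm C}(\theta,0)$. For $\beta=2$ the matrix equation (\ref{5.1h}) is then a $3\times3$ system and for $\beta=4$ a $5\times5$ one. First I would, in each case, evaluate the entries $\tilde A_p-z\tilde B_p$, $(n-p)E_p$, $zD_p$ of the tridiagonal matrix at these parameters, and then carry out the elimination: (\ref{K1}) expresses $K_1$ through $K_0$ and $\delta_z K_0$, (\ref{K2}) then expresses $K_2$ through $K_0,\delta_z K_0,\delta_z^2K_0$, and continuing up to $p=\beta-1$ one obtains each $K_p$ as a differential polynomial in $K_0$; substituting the expressions for $K_{\beta-1}$ and $K_\beta$ into the $p=\beta$ row of (\ref{5.1h}) yields a scalar ODE of order $\beta+1$ for $K_0(z)$ with coefficients that are polynomials in $z$ and $N$. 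Because these coefficients are cumbersome already for $\beta=2$, this step is best done with computer algebra.

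I would then change variables to $\theta$ via $z\,d/dz=-i\,d/d\theta$ (hence $\delta_z=(1-e^{i\theta})(-i)\,d/d\theta$), set $\theta=2\pi X/N$ and let $N\to\infty$. By (\ref{C1d}) and \cite{Fo94j,FLT20}, the rescaled correlation $(2\pi/N)^2\rho_{(2),N}^{\rm C}(2\pi X/N,0)$ --- and so, up to the $N$-dependent constant, $K_0(e^{2\pi iX/N})$ --- has an expansion in $1/N^2$ whose leading term is $\rho_{(2),\infty}^{\rm (C)}(X,0)$ as in (\ref{K3}). Inserting this expansion into the scalar ODE and using $e^{2\pi iX/N}=1+2\pi iX/N+O(1/N^2)$, $1-e^{2\pi iX/N}=-2\pi iX/N+O(1/N^2)$, $\delta_z=-X\,d/dX+O(1/N)$, the leading power of $N$ in the equation cancels identically (this is the assertion that $K_1,\dots,K_\beta$ share the common limit of $K_0$), and the next order is a homogeneous linear ODE for $\rho_{(2),\infty}^{\rm (C)}(X,0)$, the overall constant dropping out. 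A direct inspection then shows this equation is of order $\beta+1$ and, for $\beta=2$ and $\beta=4$, is exactly (\ref{K4a}) and (\ref{K4b}); the terms $(2\pi X)^2$ in (\ref{K4a}) and $(2\pi X)^2,(2\pi X)^4$ in (\ref{K4b}) come from coefficients of order $N^2$ and $N^4$ multiplied by powers of $1-e^{2\pi iX/N}$.

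The main obstacle I anticipate is bookkeeping in the $N\to\infty$ step rather than the algebra per se: one has to match the $1/N$-expansion of the polynomial coefficients against that of $K_0$ consistently, and check that, after the leading powers of $N$ cancel, the surviving differential operator has undiminished order $\beta+1$, so that $\rho_{(2),\infty}^{\rm (C)}(X,0)$ is indeed characterised by it (up to the finitely many integration constants pinned down by its known small-$X$ behaviour). As an independent check one verifies that the known closed forms solve the equations: the sine-kernel form $\rho_{(2),\infty}^{\rm (C)}(X,0)=1-(\sin\pi X/\pi X)^2$ implied by (\ref{7}) satisfies (\ref{K4a}), and the corresponding closed form for the circular $\beta=4$ bulk two-point function satisfies (\ref{K4b}); this also confirms that the remaining solutions of these linear ODEs are spurious for the present problem.
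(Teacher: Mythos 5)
Your proposal follows essentially the same route as the paper: specialising the differential--difference system (\ref{5.1f}) at $p=0$, $n=\beta$, $\alpha=N-2$, $\lambda_1=\lambda_2=2/\beta-1$, $\kappa=2/\beta$, $z=e^{i\theta}$, reducing the tridiagonal matrix equation (\ref{5.1h}) to a scalar ODE of order $\beta+1$ for $K_0$, and then extracting the leading order of the bulk scaling limit $\theta=2\pi X/N$, $N\to\infty$ by computer algebra, with the known closed forms (\ref{K4c}) and (\ref{K4c+}) serving as the check. The only (inessential) difference is one of implementation: the paper recommends taking the $N\to\infty$ limit during the reduction to keep the coefficients manageable, rather than first deriving the full finite-$N$ scalar equation as you propose.
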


As a check, we recall (see e.g.~\cite[Eq.~(7.2)]{Fo10}) the explicit functional form
\begin{equation}\label{K4c}
\rho_{(2),\infty}^{\rm (C)}(X, 0) \Big |_{\beta = 2} = 1 - \Big ( {\sin \pi X \over \pi X} \Big )^2 =
1 - {1 \over 2 (\pi X)^2} +  {\cos 2 \pi X \over \pi^2 X^2}.
\end{equation}
It's straightforward to verify that the three linearly independent solutions of (\ref{K4a}) can
be chosen as
\begin{equation}\label{K5}
1 - \Big ( {1 \over 2 \pi X} \Big )^2, \qquad {\cos 2 \pi X \over  X^2}, \qquad {\sin 2 \pi X \over  X^2}.
\end{equation}
To specify the particular solution corresponding to $\rho_{(2),\infty}^{\rm (C)}(X, 0) $ we need to
know the scalars in the linear combination. The bulk scaling limit corresponds to a unit density,
so we must have $\rho_{(2),\infty}^{\rm (C)}(X, 0) \to 1$ as $X \to \infty$, telling us that the scalar
multiplying the first solution in (\ref{K5}) is unity. The coefficient of the third solutions must be
zero, since it is odd in $X$. And requiring that $\rho_{(2),\infty}^{\rm (C)}(X, 0)$ be bounded as
$X \to 0$ (in fact it must vanish as $X^2$) tells us that the coefficient of the second term is
$1/\pi^2$. The indicial equation for (\ref{K4a}) has roots $\{-2,-1,2\}$, and so
 (\ref{K4c}) is the Frobenius series solution corresponding to exponent 2.
 
 \begin{remark} Consider the CUE conditioned so that there is an eigenvalue fixed
 at $\theta = 0$. Let $\tilde{\rho}_{(k)}(X_1,\dots,X_k)$ denote the corresponding
 $k$-point correlation function in the bulk scaling limit with $\tilde{\rho}_{(1)}(X)
 \sim 1/2 \pi$ as $X \to \infty$ (the normalisation of $1/2 \pi$ rather than unity is for convenience).
 We know from \cite{FW04}, \cite[Prop.~8.3.4, with $\mu=0, \omega = 1$]{Fo10} that the
 series
 $$
 1 + \sum_{k=1}^\infty {(-\xi)^k \over k!} \int_0^x d X_1 \cdots  \int_0^x d X_k \, \tilde{\rho}_{(k)}(X_1,\dots,X_k),
 $$
 which for $\xi = 1$ is equal to the probability that the interval $(0,x)$ is eigenvalue free
 (see \cite[\S 9.1]{Fo10}), has the $\tau$-function form
 $$
 \exp \int_0^x h(-is) \, {ds \over s},
 $$
 where $h(t)$ satisfies the $\sigma$PV equation (see \cite[Eq.~(8.15)]{Fo10}) with parameters
 $\nu_0 = \nu_1 = 0$, $\nu_2 = - \nu_3 = 1$,
 \begin{equation}\label{jh}
 (t \sigma'')^2 - \Big (\sigma - t \sigma' + 2 ( \sigma')^2 \Big )^2 +
 4 (\sigma')^2 \Big ( (\sigma')^2 - 1 \Big ) = 0.
 \end{equation}
 Notice that (\ref{jh}) is independent of $\xi$. This appears only in the boundary condition, which
 reads $h(-it) \sim - \xi t \rho_{(1)}(t)$ as $t \to 0^+$. As in the relationship between
 the third order linear differential equations derived for the GUE and LUE and particular
 $\sigma$ Painlev\'e equations in \cite[\S 3.3]{FT18}, and similarly for the JUE \cite[Remark 2.3]{RF19},
 it follows by changing variables $t \mapsto - it$, substituting the boundary condition in (\ref{jh}), 
 and equating leading order
 in $\xi$ that $t \rho_{(1)}(t)$ satisfies a nonlinear equation
 \begin{equation}\label{jh1}
 (t \sigma'')^2 + (\sigma - t \sigma')^2 - 4 (\sigma')^2 = 0.
  \end{equation}
  Differentiating this with respect to $t$ gives a third order linear differential equation,
  which with the substitutions $\sigma(t) = t \rho_{(1)}(t)$ and $t \mapsto 2 \pi t$ is
  precisely (\ref{K4a}).
  \end{remark}

In relation to the differential equation (\ref{K4b}), we know that $\rho_{(2),\infty}^{\rm (C)}(X, 0) |_{\beta = 4}$
can be expressed in terms of the sinc function (see e.g.~\cite[Eq.~(7.92)]{Fo10}),
\begin{equation}\label{K4c+}
\rho_{(2),\infty}^{\rm (C)}(X, 0) \Big |_{\beta = 4} = 1 - \Big ( {\sin 2 \pi X \over 2 \pi X} \Big )^2 +
{1 \over 2 \pi} \Big (  {\partial \over \partial X} {\sin 2 \pi X \over 2 \pi X} \Big )
\int_0^{2 \pi X} {\sin t \over t } \, dt.
\end{equation}
Two linearly independent elementary solutions of (\ref{K4b}) are
\begin{equation}\label{K4d}
{\sin 2 \pi X \over 4 X} + {\cos 2 \pi X \over 8 \pi X^2}, \qquad
{\cos 2 \pi X \over 4 X} - {\sin 2 \pi X \over 8 \pi X^2}.
\end{equation}
These can be supplemented by  solutions which  for $X \to \infty$ have the structures
\begin{equation}\label{K4e}
1 - {1 \over  (2\pi X)^2} \bigg ( 1 + {\rm O} \Big ( {1 \over X^2} \Big ) \bigg ),
\end{equation}
and
\begin{equation}\label{K4e+}
{e^{\pm 4 \pi i X} \over X^4} \bigg ( 1 \pm i {c_1 \over X} +  {\rm O} \Big ( {1 \over X^2} \Big ) \bigg ).
\end{equation}
Of the five linearly independent solutions in (\ref{K4d}), (\ref{K4e}) and (\ref{K4e+}),
the first in
(\ref{K4d}) and the imaginary part of (\ref{K4e+}) do not
contribute to the $X \to \infty$ asymptotic expansion of (\ref{K4c+}), which
reads
\begin{multline}\label{K4f}
\rho_{(2),\infty}^{\rm (C)}(X, 0) \Big |_{\beta = 4}  \mathop{\sim}\limits_{X \to \infty} 
1 - {1 \over  (2\pi X)^2} \bigg ( 1 + {\rm O} \Big ( {1 \over X^2} \Big ) \bigg )
+ {\cos 2 \pi X \over 4 X} - {\sin 2 \pi X \over 8 \pi X^2}  \\
 + {  \cos 4 \pi   X  \over 32 \pi^4 X^4 }   \bigg ( 1 + {\rm O} \Big ( {1 \over X^2} \Big ) \bigg ) 
+ { \sin 4 \pi   X \over 16 \pi^5 X^5}  \bigg ( 1 + {\rm O} \Big ( {1 \over X^2} \Big ) \bigg ).
\end{multline}

Also significant from the viewpoint of (\ref{K4b}) is the behaviour as $X \to 0$. We can check from (\ref{K4c+}) that 
\begin{equation}\label{K4g}
\rho_{(2),\infty}^{\rm (C)}(X, 0) \Big |_{\beta = 4}  = {(2 \pi X)^4 \over 135} - {2 (2 \pi X)^6 \over 4725} + {\rm O} (X^8).
\end{equation}
With the roots of the indicial equation of (\ref{K4b}) being $\{ -2,-1,1,4\}$, with $-2$ repeated, it follows
that (\ref{K4c+}) is the Frobenius series solution corresponding to the exponent 4.

\begin{remark}\label{R3}
We observe that
\begin{equation}\label{tm}
{\partial \over \partial X} \bigg ( {\sin 2 \pi X \over 2 \pi X} \bigg ) =
{\cos 2 \pi X \over X} - {\sin 2 \pi X \over 2 \pi X^2}
\end{equation}
and is thus proportional to the second of the elementary solutions of
(\ref{K4b}) as listed in (\ref{K4d}). We can therefore add a scalar multiple
of (\ref{tm}) to (\ref{K4c+}) --- the scalar we will take to equal $-{1 \over 2}$ --- to obtain
the further solution of (\ref{K4b})
\begin{equation}\label{tR}
R(X)  = 1 - \Big ( {\sin 2 \pi X \over 2 \pi X} \Big )^2 +
{1 \over 2 \pi} \Big (  {\partial \over \partial X} {\sin 2 \pi X \over 2 \pi X} \Big )
\bigg ( - {\pi \over 2} + 
\int_0^{2 \pi X} {\sin t \over t } \, dt \bigg ).
\end{equation}
We recognise this (see e.g.~\cite[Eq.~(7.133)]{Fo10}) as $\rho_{(2),\infty}^{\rm (C)}(2X, 0) \Big |_{\beta = 1} $,
restricted to $X > 0$.
\end{remark}

A closely related  inter-relationship between $\rho_{(2),\infty}^{\rm (C)}$ for $\beta = 1$ and $\beta = 4$ is known
from \cite{FJM00}.  Thus for general $\beta > 0$ define the structure function
\begin{equation}\label{tR1}
S(k;\beta) = \int_{-\infty}^\infty \Big ( \rho_{(2),\infty}^{\rm (C)}(x,0) - 1 + \delta(x)\Big ) e^{i k x} \, dx,
\end{equation}
which according to (\ref{2a})  is equivalent to the definition (\ref{5}), and define the function $f(k;\beta)$ by (\ref{tR2}).
Then we know from \cite{FJM00} that
\begin{equation}\label{tR3}
f(k;\beta) = f \Big ( - {2k \over \beta} ; {4 \over \beta} \Big ).
\end{equation}

However, the restricted range in (\ref{tR2}) is a necessary consequence of the Fourier expansion
in (\ref{C1c}) terminating, whereas the corresponding expansion in the case $\beta = 1$
does not. For finite $N$, we know from \cite[Prop.~4.7]{WF14} that for $p \ne 0$ the
coefficients in (\ref{C1c}) modified by the addition of $N$ and so in terms of
$\tilde{\alpha}_p(N,\beta) := {\alpha}_p(N,\beta) + N$, satisfies the functional
equation
\begin{equation}\label{tR4}
 \tilde{\alpha}_p(N,\beta)  = (4/\beta^2)  \tilde{\alpha}_p(- \beta N/2,4/\beta),
\end{equation}
whenever both sides are nonzero.
This was later observed \cite[Appendix B and Eq.~(4.15)\footnote{Duy Khanh
Trinh (personal correspondence) has pointed out that the normalising factor $(-\kappa)^k$ on
the RHS should be deleted and replaced by a factor of $-(1/\kappa)$.}]{FRW17} to
follow from the functional equation for the moments of the Jacobi $\beta$-ensemble
\cite{DP12}.

If (\ref{tR4}) was valid for all $p \ne 0$ a characterising equation for
$\rho_{(2),N}^{\rm (C)}(X, 0) \Big |_{\beta = 1} $, would be obtained from
that for $\rho_{(2),N}^{\rm (C)}(X, 0) \Big |_{\beta = 4} $ by simply replacing
$N$ by $-N/2$; see \cite{WF14}, \cite{RF19} for application of this strategy in 
relation to linear differential equations for the classical Gaussian, Laguerre and
Jacobi $\beta$-ensembles. Taking the limit $N \to \infty$ would then give the finding that
(\ref{tR}) satisfies (\ref{K4b}).

\begin{remark}
In the sense of generalised functions, it follows from (\ref{tR1}) that
$$
S(k;\beta) = \int_{-\infty}^\infty \rho_{(2)}^{(C)}(x,0) e^{i k x} \, dx- 2 \pi \delta(k) + 1;
$$
here $ \rho_{(2)}^{(C)}(x,0)$ is to be regularised at infinity by multiplication by a rapidly
decaying function, for example $e^{- \mu x^2}$, with the limit $\mu \to 0^+$ taken after
the integration. A differential equation for $g(k) := \int_{-\infty}^\infty  \rho_{(2),N}^{\rm (C)}(x, 0) 
e^{i k x} \, dx$ so interpreted can, in the case $\beta = 4$, be deduced from 
(\ref{K4b}) upon multiplying through by $e^{i k X}$, integrating over $X$ and simplifying using
integration by parts. We find
\begin{multline}\label{Nv}
(-64k \pi^4 + 20 k^3 \pi^2 - k^5) g^{(5)}(k) +
(-320 \pi^4 + 236 k^2 \pi^2 - 15k^4) g^{(4)}(k) \\
+ (-52 k^3 + 640 k \pi^2)  g^{(3)}(k) + (-28 k^2 + 320 \pi^2)  g^{(2)}(k) =0;
\end{multline}
note the absence of terms involving $g'(k)$ or $g(k)$. However for $\beta = 4$ we know the explicit evaluation
(see e.g.~\cite[Eq.~(7.95)]{Fo10})
$$
S(k;\beta) \Big |_{\beta = 4} =
\left \{
\begin{array}{ll} {k \over 4 \pi} - {k \over 8 \pi} \log | 1 - {k
\over 2 \pi}|, &
 0 < k \le 4 \pi, \\[.2cm]
1, &
k \ge 4 \pi. \end{array} \right.
$$
This indeed can be checked to satisfy (\ref{Nv}). We remark too that since (\ref{Nv}) is unchanged by replacing $k$ by
$-k$, we have that $S(-k;\beta)  |_{\beta = 4} $ also satisfies (\ref{Nv}).
\end{remark}

\subsection{The case $\beta = 6$}
The use of computer algebra to carry out the procedure of subsection \S \ref{s2.1} gives
the analogue of Proposition \ref{p2} for $\beta = 6$.

\begin{proposition}\label{p4}
For $\beta = 6$, $\rho_{(2),\infty}^{\rm (C)}(X, 0) $ satisfies the 7th order linear differential equation
\begin{multline}\label{J4a}
9X^7 f^{(7)}(X)+168X^6f^{(6)}(X)+(700+504 \pi^2 X^2)X^5 f^{(5)}(X) +(-692+4704 \pi^2 X^2)X^4 f^{(4)}(x)\\
+(-3924+4080 \pi^2 X^2 +7056 \pi^4 X^4)X^3f^{(3)}(X) 
+(2688-18144 \pi^2 X^2 +20736 \pi^4 X^4) X^2 f^{(2)}(x) \\+(2808+144 \pi^2 X^2 -17280 \pi^4 X^4 +20736 \pi^6 X^6)X f'(X)  \\
+(-1008+6336 \pi^2 X^2 -6912 \pi^4 X^4)f(X) = 0.
\end{multline}
\end{proposition}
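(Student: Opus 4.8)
The plan is to apply the reduction scheme of \S\ref{s2.1} with the parameter specialisation recorded there: $p=0$, $\alpha = N-2$, $n = \beta = 6$, $\lambda_1 = \lambda_2 = \tfrac{2}{\beta} - 1 = -\tfrac{2}{3}$, $\kappa = \tfrac{2}{\beta} = \tfrac{1}{3}$ and $z = e^{i\theta}$, so that by (\ref{C1d}) the multiple integral $K_{0,6}^{(N-2)}(z)$ is proportional to $\rho_{(2),N}^{\rm (C)}(\theta,0)$ (with the proportionality constant depending on $N$, which is immaterial since the scalar equation sought is linear and homogeneous). With these values the coefficients $\tilde A_p$, $\tilde B_p$, $D_p$, $E_p$ of the Corollary become explicit rational functions of $N$, and one checks that the coefficients $(n-p)E_p$ with $p=0,\dots,5$ — which are divided out when (\ref{5.1f}) is solved recursively for $K_{p+1}$ — are nonzero, so that the $7\times 7$ tridiagonal system (\ref{5.1h}) for $\mathbf v = [K_{p,6}^{(N-2)}(z)]_{p=0}^{6}$ can be reduced as in \S\ref{s2.1}.

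The elimination then proceeds exactly as described after (\ref{5.1h}): the $p=0$ row (\ref{K1}) expresses $K_1$ through $K_0$ and $\delta_z K_0$; recursively, (\ref{5.1f}) for $p=1,\dots,5$ writes each $K_p$ as $\sum_{j=0}^{p} c_{p,j}(z,N)\, \delta_z^{\,j}K_0$ with $c_{p,j}$ rational in $z$ and $N$; and substituting the cases $p=5$, $p=6$ into (\ref{5.1f}) with $p=6$ yields a seventh-order scalar linear ODE for $K_0(z)$ with cumbersome rational coefficients in $z$ and $N$. Since it is the bulk limit rather than the finite-$N$ equation that is wanted, I would pass to the scaled variable by setting $z = e^{i\theta}$, $\theta = 2\pi X/N$, under which $\delta_z = z(1-z)\tfrac{d}{dz}$ becomes $\tfrac{N}{2\pi i}\bigl(1 - e^{2\pi i X/N}\bigr)\tfrac{d}{dX} = -X\tfrac{d}{dX} + {\rm O}(1/N)$, expand each coefficient and each power $\delta_z^{\,k}$ as a series in $1/N$, and keep the leading-order term. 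By the existence of the limit (\ref{K3}) and the ${\rm O}(1/N^2)$ rate of convergence recalled from \cite{FLT20}, the resulting operator annihilates $\rho_{(2),\infty}^{\rm (C)}(X,0)\big|_{\beta=6}$; clearing denominators puts it in the form (\ref{J4a}).

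Everything after the parameter substitution — the recursive elimination, the bookkeeping of the $1/N$ expansions, and the final simplification — produces expressions far too large to handle by hand, so in practice the whole computation is carried out in a computer algebra system; as emphasised in \S\ref{s2.1}, and as was already the case for the density computations of \cite{RF19}, it is essential to interleave the scaling limit with the elimination rather than first record the scalar equation for finite $N$, and managing this cleanly is the main obstacle. As independent checks on (\ref{J4a}) I would verify that its indicial equation at $X=0$ has a root equal to the leading Frobenius exponent of the $\beta$-dimensional integral representation of $\rho_{(2),\infty}^{\rm (C)}(X,0)\big|_{\beta=6}$, that the expected oscillatory solutions (the analogues of (\ref{K4d})--(\ref{K4e+}), now carrying frequencies $2\pi$, $4\pi$ and $6\pi$) solve the homogeneous equation, and — anticipating the application to Proposition \ref{p5} — that the small-$|k|$ coefficients extracted from (\ref{J4a}) through the Fourier transform (\ref{tR1}) are consistent with the known values $\{\alpha_p(6)\}_{p=1}^{10}$ from \cite{FJM00,WF14} and with the duality (\ref{tR3}) relating $\beta=6$ to $\beta = 2/3$.
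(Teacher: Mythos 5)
Your proposal is correct and follows essentially the same route as the paper, whose ``proof'' of Proposition \ref{p4} is simply the statement that computer algebra carries out the procedure of \S\ref{s2.1} (parameter specialisation in (\ref{C1d}), tridiagonal elimination of (\ref{5.1h}) down to a scalar 7th order equation, then the bulk scaling $z = e^{2\pi i X/N}$ with only the leading order in $1/N$ retained). Your expansion $\delta_z = -X\,d/dX + {\rm O}(1/N)$ and your proposed consistency checks (indicial roots, oscillatory solutions, agreement with the known small-$k$ coefficients and the $\beta \leftrightarrow 4/\beta$ duality) all match what the paper records around (\ref{J4b})--(\ref{S2y}).
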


\begin{remark}\label{R11a}
Our (computer algebra assisted) method of deduced (\ref{J4a}) can be extended to higher even
$\beta$ values. For example, for $\beta = 8$ we find that $\rho_{(2),\infty}^{\rm (C)}(X, 0) $ satisfies the 9th order 
homogeneous linear differential equation
\begin{multline}\label{r11b}
16 X^9 f^{(9)}(X) + 480 X^8 f^{(8)}(X)  + (4056   + 1920 \pi^2 X^2) X^7  f^{(7)}(X) +  \cdots  \\
+ (20000  - 259584 \pi^2 X^2 + 688128 \pi^4 X^4  - 
 589824 \pi^6 X^6 ) f(X)= 0,
\end{multline}
where the writing of all terms involving lower order derivatives has been suppressed, due to the
increasing complexity with respect to the number of monomials in their coefficients and the
size of the corresponding scalars.
\end{remark}

We can check that the seven linearly independent solutions of (\ref{J4a}) can be distinguished according to the
$X \to \infty$ behaviours
\begin{align}\label{J4b}
&1 - {1 \over 6 (\pi X)^2 }+ {\rm O} \Big ( {1 \over X^4} \Big ), \\
& {e^{\pm 2 \pi i p X} \over X^{2 p^2/3}} \bigg ( 1 \pm i {4 p \over 3} \Big ( 1 - {p^2 \over 3} \Big ) {1 \over X} +
{\rm O} \Big ( {1 \over X^2} \Big ) \bigg ), \quad (p=1,2,3). \label{J4c}
\end{align}
Using the integral formula for  $\rho_{(2),\infty}^{\rm (C)}(X, 0) $ with $\beta = 6$ which follows from
(\ref{C1d}), the asymptotic expansion of the two-point correlation function is known from 
\cite{Fo93aa} to be given by the first of these forms, plus a linear combination of the real part of the complex
solutions.

One advantage of knowledge of the differential equation over the integral formula is that the former is
readily suited to extending the asymptotic expansion. Of particular interest are higher order 
non-oscillatory terms, and thus the extension of (\ref{J4b}).

\begin{corollary}
The solution of (\ref{J4a}) corresponding to  (\ref{J4b}) has the $X \to \infty$
asymptotic expansion
\begin{multline}\label{J4c}
1 - {1 \over 6 (\pi X)^2 } + {1 \over 9 (\pi X)^4} -  {55 \over 162 (\pi X)^6} + {5215 \over 1944 (\pi X)^8} - {17105 \over 432 (\pi X)^{10}} \\
+ {681505 \over 729 (\pi X)^{12}} - {140887175 \over 4374 (\pi X)^{14}}  + {\rm O} \Big ( {1 \over X^{16}} \Big ).
\end{multline}
\end{corollary}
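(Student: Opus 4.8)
The plan is to treat the differential equation~(\ref{J4a}) as a recurrence for the coefficients of a formal power series in $1/X$. Since the solution in question is distinguished by the leading behaviour $1 - 1/(6\pi^2 X^2) + \mathrm{O}(1/X^4)$ in~(\ref{J4b}), which contains no oscillatory factor and is even in $X$, I would posit an ansatz
\begin{equation*}
f(X) = \sum_{m=0}^\infty \frac{c_m}{(\pi X)^{2m}}, \qquad c_0 = 1, \quad c_1 = -\tfrac16,
\end{equation*}
substitute it into~(\ref{J4a}), and collect powers of $X$. Each monomial $X^j f^{(j)}(X)$ acting on $X^{-2m}$ returns a numerical multiple of $X^{-2m}$, so the equation separates into one scalar relation for each power of $X^{-2m}$; the $\pi^2 X^2$ and $\pi^4 X^4$ and $\pi^6 X^6$ prefactors in~(\ref{J4a}) couple $c_m$ to $c_{m-1}$, $c_{m-2}$, $c_{m-3}$. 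Because the coefficient of the highest-order ``diagonal'' term (the part with no extra powers of $X$) is a nonzero polynomial in $m$ for $m$ large --- it is essentially the indicial polynomial evaluated at $-2m$, whose roots by Proposition~\ref{p4}'s remark are among $\{-2,\ldots\}$ and in any case bounded --- the recurrence determines $c_m$ uniquely from $c_{m-1}, c_{m-2}, c_{m-3}$ once $m$ exceeds those roots. Running it from $m=2$ up to $m=7$ then produces the stated coefficients $c_2 = 1/9$, $c_3 = -55/162$, and so on, up to $c_7 = -140887175/4374$, with the $\mathrm{O}(1/X^{16})$ error being the tail of the series.

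The one subtlety is \emph{which} solution the recurrence is picking out. The space of formal series of the form $\sum c_m X^{-2m}$ that solve~(\ref{J4a}) is governed by the indicial equation; from~(\ref{J4b})--(\ref{J4c}) the purely-$X^{-2m}$-type (non-oscillatory, even) solutions form at most a two-parameter family --- effectively the choices of $c_0$ and of a second integration constant that would enter at the first index where the diagonal polynomial vanishes. I would verify that, given $c_0 = 1$, the normalisation forcing $c_1 = -1/6$ (which is exactly the $\beta = 6$ bulk-density constraint $\rho_{(2),\infty}^{\rm(C)}(X,0) \sim 1 - 1/(\beta(\pi X)^2)$, cf.~(\ref{J4b})) pins down the series uniquely, so there is no residual freedom and the computed $c_m$ are forced. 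Equivalently, one appeals to the fact, quoted just before the corollary, that the true asymptotic expansion of $\rho_{(2),\infty}^{\rm(C)}(X,0)$ at $\beta = 6$ is the first form in~(\ref{J4b}) plus a combination of \emph{oscillatory} terms only; hence its non-oscillatory part is precisely the series solution we are constructing, and matching the first two coefficients to~(\ref{J4b}) identifies it.

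The main obstacle is purely computational bookkeeping: one must expand the seven products $X^k f^{(k)}(X)$ for $k = 1,\ldots,7$ against $X^{-2m}$, track the three-term (in $m$) coupling induced by the $\pi^{2},\pi^{4},\pi^{6}$ prefactors, and solve the resulting rational recurrence exactly through $m = 7$. This is where computer algebra is indispensable and where sign or index errors are easy to make; I would cross-check the output two ways --- first by confirming that $c_1 = -1/6$ emerges automatically from the $m=1$ relation (rather than being imposed), and second by re-deriving $c_2 = 1/9$ independently from the known integral representation of $\rho_{(2),\infty}^{\rm(C)}(X,0)$ at $\beta = 6$ that follows from~(\ref{C1d}), exactly as the analogous lower-order coefficients were checked in~(\ref{K4g}) for $\beta = 4$. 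Once the recurrence and its first outputs are validated, the remaining coefficients follow mechanically, giving~(\ref{J4c}).
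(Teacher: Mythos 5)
Your proposal is correct and is essentially the paper's own method: substitute the non-oscillatory even ansatz $1+\sum_{n\ge1}c_n X^{-2n}$ into the differential equation (\ref{J4a}), observe that the $\pi^2X^2$, $\pi^4X^4$, $\pi^6X^6$ prefactors produce a four-term linear recurrence with nonvanishing leading multiplier so the coefficients are forced once the solution of type (\ref{J4b}) is selected, and run the recurrence by computer algebra (the paper states this explicitly when repeating the calculation for $\beta=8$ in the remark containing (\ref{12.3})). Your proposed cross-checks are also in the paper's spirit --- it validates the $1/X^{10}$ coefficient against the known $k^8$ coefficient (\ref{f3}) of the structure function via (\ref{f2}) rather than against the integral representation, but this is an immaterial difference.
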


Knowledge of the non-oscillatory $X \to \infty$ expansion of $\rho_{(2),\infty}^{\rm (C)}(X, 0) $ gives information on the form of the
$k \to 0$ expansion of the structure function (\ref{tR1}). Thus, as made explicit previously in \cite[\S 4]{FJM00}, the non-oscillatory
$X \to \infty$ expansion
\begin{equation}\label{f1}
\rho_{(2),\infty}^{\rm (C)}(X, 0)  \mathop{\sim}\limits_{X \to \infty} 
1 + \sum_{n=1}^\infty {c_n \over X^{2n}}
\end{equation}
is equivalent to the singular small-$k$ expansion of the structure function
\begin{equation}\label{f2}
S(k;\beta)   =  \pi \sum_{n=1}^\infty
{(-1)^n c_n \over (2n-1)! } |k|^{2n-1}
\end{equation}
(in distinction to the expansion (\ref{f1}), this series has a nonzero radius of convergence).
Another viewpoint on (\ref{f2}) is that, after multiplying by $\pi \beta / |k|$, it gives
the even part of the power series of the analytic function $f(k;\beta)$ (\ref{tR2}).
Previous work has determined the power series expansion of $f(k;\beta)$, up to and
including the coefficient of $k^9$ for general $\beta > 0$; each coefficient of $k^j$ is a
polynomial of degree $j$ in $2/\beta$. In particular, with $x:= \beta / 2$, the coefficient
of $k^8$ is \cite[\S 8]{FJM00}, \cite[Remark 4.1]{WF15}
\begin{equation}\label{f3}
\Big ( {1 \over 2 \pi x} \Big )^8 (x - 1)^2 \Big (
x^6 - {263 \over 84} x^5 + {1697 \over 315} x^4 -
{6337 \over 1008} x^3 + {1697 \over 315} x^2- {263 \over 84} x + 1 \Big ).
\end{equation}
Multiplying by $1/(2\pi x)$, substituting $\beta = 6$ and recalling (\ref{f2}) gives the value of $c_{10}$ as read off from (\ref{J4c}).

\begin{remark}\label{R6a}
From the discussion of Remark \ref{R3} we conjecture that $\rho_{(2),\infty}^{\rm (C)}(3X, 0) \Big |_{\beta = 2/3}$
satisfies the differential equation (\ref{J4a}). Recalling the definition (\ref{tR2}) we see that the replacement
$X \mapsto 3 X$ in (\ref{f1}), and thus $c_n \mapsto c_n/3^{2n}$, is when substituted in (\ref{f2}) consistent with
the functional equation (\ref{tR3}).
\end{remark}

Also compelling in favour of the conjecture in Remark \ref{R6a}
are the forms of the small $X$ solutions permitted by (\ref{J4a}).
Thus direct substitution shows there are solutions
\begin{equation}\label{S1y}
X^6 \Big ( 1 - {18 \pi^2 \over 55} X^2 + {144 \pi^4 \over 2695} X^4 - {3456 \pi^6 \over 595595} X^6 + \cdots \Big )
\end{equation}
and (for $X>0$)
\begin{equation}\label{S2y}
X^{2/3} \Big ( 1 - {6 \pi^2 \over 5} X^2 + {72 \pi^4 \over 175} X^4 - {144 \pi^6 \over 1925} X^6 +  \cdots \Big ).
\end{equation}
The first of these is the small $X$ form of $\rho_{(2),\infty}^{\rm (C)}(X, 0) |_{\beta = 6}$. The second is consistent
with the leading small $X$ form of $\rho_{(2),\infty}^{\rm (C)}(3X, 0) |_{\beta = 2/3}$ (in contrast to the large 
$X$ form for $\beta$ rational \cite{Ha95}, \cite[\S 13.7.3]{Fo10}, the structure of the small $X$ expansion to
all leading independent powers appears not to be documented in the literature). In relation to this last point,
we can check that the roots of the indicial equation corresponding to (\ref{J4a}) are $\{-7/3,-2,-1,2/3,3,6\}$
with $-2$ of multiplicity 2.  Thus, assuming the validity of the conjecture, the small $X$ expansion
of $\rho_{(2),\infty}^{\rm (C)}(3X, 0) |_{\beta = 2/3}$ can consist of a linear combination of (\ref{S2y})
and  $X^3$ times a power series in $X$.

\subsection{Coefficient of $k^{11}$ in $S(k;\beta)$}
As an application of the expansion (\ref{J4c}), we will show how knowledge of the coefficient of $1/X^{12}$,
when combined with already established results from \cite{FJM00},
allows us to deduce the coefficient of $k^{10}$ in the expansion of $f(k;\beta)$.

First, we know from  \cite{FJM00}, \cite{WF15} that the sought coefficient has a polynomial structure
\begin{equation}\label{f3}
[ k^{10}] {\beta \pi \over |k|} S(k;\beta) =
\Big ( {1 \over 2 \pi x} \Big )^{10} (x - 1)^2 \sum_{l=0}^8 b_{10,l} x^l \quad {\rm with} \quad b_{10,l} = b_{10,8-l},
\end{equation}
where $[ k^{10}] h(k)$ denotes the coefficient of $k^{10}$ in $h(k)$, and $x = \beta/2$ as in (\ref{f3}).
Results in \cite{FJM00} also tell us that
\begin{align}\label{f4}
1 & = b_{10,0} \nonumber \\
 -\sum_{q=1}^{11} {1 \over q} (1 - 2^{q-11}) & = {1 \over 2} (b_{10,1} - 2),
 \end{align}
  which are obtained by performing an exact expansion about $\beta = 0$;
  \begin{equation}\label{f5}
 {1 \over 5}  = \Big ( (1 + 2^{-8}) + (2^{-1} + 2^{-7})b_{10,1} + (2^{-2} + 2^{-6})b_{10,2} + 
 (2^{-3} + 2^{-5})b_{10,3}  + 2^{-4} b_{10,4} \Big ),
  \end{equation}
  which follows from knowledge of the Fourier transform of (\ref{K4c+});
   \begin{equation}\label{f6}
   [|k|^{11}] {\partial S(k;\beta) \over \partial \beta} \Big |_{\beta = 4} =
   {1949 \over 275251200 \pi^{11}},
 \end{equation}   
which follows from performing an exact expansion about $\beta = 4$  \cite[Eq.~(7.22)]{FJM00} and gives
  \begin{equation}\label{f7}
 {1949 \over 275251200 } = 4^{-11}  \sum_{l=0}^8 b_{10,l} 2^{l }
 \Big ( 1 + {1 \over 4} (l - 11) \Big ).
  \end{equation}

The new information comes from  (\ref{f2}) and (\ref{J4c}) to deduce the value of the left hand side of
(\ref{f3}) for $\beta = 6$, telling us that
 \begin{equation}\label{f8}
 {681505 \over 729} = {11! \over 6^{11}} 2^2 \sum_{l=0}^8 b_{10,l} 3^l.
   \end{equation} 
   
   In (\ref{f3}) the coefficient $b_{10,0}$ and $b_{10,1}$ are known immediately from (\ref{f4}), leaving
   three unknowns $b_{10,j}$, $j=2,3,4$. The equations (\ref{f6}), (\ref{f7}) and (\ref{f8}) give three
   independent linear equations for these unknowns. Solving, we have fully determined the polynomial
   on the right hand side of (\ref{f3}), and Proposition \ref{p5} as stated in Section \ref{S1.1} follows.

 \begin{remark}
1.  As is the case for $[ k^{j}] {\beta \pi \over |k|} S(k;\beta) $, $j=1,\dots,9$ the polynomial
 part of (\ref{f9}) has all zeros on the unit circle in the complex $x$-plane. Moreover, the
 zeros in the upper half plane of the sixth order polynomial in (\ref{f3}) interlace the zeros in
 the upper half plane of the eighth order polynomial in (\ref{f9}). For analogous observations
 in relation to the moments of the spectral density of the Gaussian $\beta$-ensemble,
 see \cite{WF14}. \\
  2. A computer algebra calculation obtained by
  substituting the non-oscillatory asymptotic expansion (\ref{f1}) in
  the differential equation (\ref{r11b})
  gives that for $\beta = 8$
  we have for the values of the coefficients
   \begin{equation}\label{12.3}
   \dots, c_{6} = {19405708245 \over 16777216 \pi^{12}}, \quad
 c_{7} =  -{11022926679765 \over 268435456 \pi^{14}}, \dots
  \end{equation} 
  From   (\ref{f2}) we know $c_{6}$ relates to $[ k^{10}] {\beta \pi \over |k|} S(k;\beta) $, which
 in turn is given by  Proposition \ref{p5} . Substituting
   $\beta = 8$ gives agreement with (\ref{12.3}).
 
  \end{remark}
 
 \section{The averaged structure function for the GUE}
 We now turn our attention to the quantity $\bar{S}(k)$ in the case of the GUE.
 Associated with this ensemble is the classical weight function $w(x) = e^{-x^2}$. 
 The corresponding monic orthogonal polynomials are
 given in terms of the  Hermite  polynomials by
 \begin{equation}\label{GL}
 p_n^{(G)}(x) = 2^{-n} H_n(x) .
 \end{equation}
 In terms of $\psi_n(x) = \sqrt{w(x)} p_n(x)$, these polynomials satisfy the matrix differential recurrence
  \begin{equation}\label{GLa}
   \begin{bmatrix} \psi_n'(x) \\ \psi_{n-1}'(x)  \end{bmatrix} =
   \begin{bmatrix}  x &  n \\
   2  &   -x \end{bmatrix}   \begin{bmatrix} \psi_n(x) \\ \psi_{n-1}(x)  \end{bmatrix}.
   \end{equation}
    In the context of gap probabilities of random matrix ensembles with a unitary symmetry,
   which includes the GUE, the significance of (\ref{GLa}) was highlighted in \cite{TW94c}.
   
  The GUE is an example of a determinantal point process, since its $k$-point correlation function
  $\rho_{(k)}(x_1,\dots,x_k)$ has the structure
   \begin{equation}\label{kp}
   \rho_{(k)}(x_1,\dots,x_k) =  \det \Big [ K_N(x_j,x_k) \Big ]_{j,l=1}^k.
   \end{equation}
   The function $K_N(x,y)$ is known as the correlation kernel, and is given in terms of the functions
   $\psi_n(x)$ specified below (\ref{GL}) according to
   \begin{equation}\label{kp1x}  
  K_N(x,y) = \sum_{n=0}^{N-1} { \psi_n(x) \psi_n(y)  \over (p_n,p_n)} = {1 \over (p_{N-1},p_{N-1})} {\psi_N(x) \psi_{N-1}(y) - 
  \psi_N(y) \psi_{N-1}(x)  \over x - y},
 \end{equation}  
 where $(f,g) = \int_{-\infty}^\infty w(x) f(x) g(x) \, dx$ is the inner product associated with $w$.
The second equality is the Christoffel-Darboux formula; see e.g.~\cite[Prop.~5.1.3]{Fo10}.
The explicit value of the required inner product in (\ref{kp1}) is (see e.g.~\cite[Eq.~(5.48)]{Fo10})
  \begin{equation}\label{kp1y} 
  (p_{N-1}, p_{N-1})^{(G)} = \sqrt{\pi} 2^{-N+1} (N-1)! .
  \end{equation}
  
  As is evident from (\ref{bhX}), the study of $\bar{S}(k)$ also involves the Laguerre unitary ensemble.
  Associated with this ensemble is the weight $w(x) = x^a e^{-x}$ ($x > 0$). Again setting $\psi_n(x) =
  \sqrt{w(x)} p_n(x)$, where now $p_n(x)$ refers to the monic orthogonal polynomial of degree $n$
  associated with the Laguerre weight, the correlation functions are again given by the determinant
  formula (\ref{kp}) with correlation kernel (\ref{kp1x}). The explicit form of $p_n(x)$, and the
  corresponding inner product required in (\ref{kp1x}), is
  \begin{equation}\label{kp1z} 
 p_n^{(L)}(x) = (-1)^n n! L_n^{(a)}(x),  \quad  (p_{N-1}, p_{N-1})^{(L)} =\Gamma(N) \Gamma(a+N).
  \end{equation} 

Combining (\ref{GLa}) with the second equality in (\ref{kp1x}) allows for the derivation of the differential identity
\cite{TW94c}, \cite[\S 5.4.2]{Fo10}
\begin{equation} \label{A1a} 
\Big ( {\partial \over \partial x} + {\partial \over \partial y} \Big )
K_N^{(G)}(x,y)  = - {1 \over  (p_{N-1},p_{N-1})^{(G)}} \Big (   \psi_N^{(G)}(x) \psi_{N-1}^{(G)}(y) +  \psi_{N-1}^{(G)}(x) \psi_{N-1}^{(G)}(y) \Big ) . 
\end{equation}
Taking the limit $x \to y = t$ is this formula gives
\begin{equation}
{d \over d t} K_N^{(G)}(t,t)  = - {2 \over  (p_{N-1},p_{N-1})^{(G)}}   \psi_N^{(G)}(t) \psi_{N-1}^{(G)}(t).  \label{A1} 
\end{equation}
A direct verification of (\ref{A1}) can also be given; see \cite[Lemma 2.3]{HT03}, \cite[Lemma 5.1]{Le04},
\cite[Eqns.~(2.48)]{GT05}.

\subsection{$\bar{S}(k)$ for the GUE}
Making use of (\ref{1})--(\ref{2a}), and (\ref{kp}) with $k=1$ and 2 in (\ref{8b}) shows
 \begin{equation}\label{kp1}
 \bar{S}^{(G)}(k) =  N - \int_{-\infty}^\infty d \lambda \, e^{-i k \lambda}   \int_{-\infty}^\infty d \lambda' \, e^{i k \lambda'} \Big ( K^{(G)}(\lambda, \lambda') \Big )^2.
 \end{equation} 
 Also worthy of independent attention is the average
  \begin{equation}\label{kp2}
  \Big \langle \sum_{j=1}^N e^{i k \lambda_j} \Big \rangle^{(G)} = \int_{-\infty}^\infty e^{i k \lambda}
  \rho_{(1)}(\lambda) \, d \lambda = \int_{-\infty}^\infty e^{i k \lambda}
   K^{(G)}(\lambda, \lambda)  \, d \lambda,
 \end{equation} 
 where the second equality follows from   (\ref{kp}) with $k=1$.
 
 We know from the identity (\ref{bhX}) of Br\'ezin and Hikami \cite{BH97} that
 the double integral involving the kernel $ K^{(G)}$ in (\ref{kp1}) can be reduced to a single integral
 involving $K^{(L)}|_{a=0}$. For the average in (\ref{kp2}), a result of Ullah from 1985 \cite{Ul85} (see also \cite[Appendix A]{DG01},
 \cite{HT03}, \cite{BN07}, \cite{vZ12}, \cite{CMS17}, \cite{Ok19}, \cite{Fo19}) gives
  \begin{equation}\label{kp3}
  \int_{-\infty}^\infty e^{i k \lambda} \rho_{(1)}^{(G)}(\lambda) \, d \lambda =
   e^{-k^2/4} L_{N-1}^{(1)}(k^2/2),
   \end{equation} 
   where $L^{(p)}_n(x)$ denotes the Laguerre polynomial,
   which when substituted in (\ref{kp2}) tells us
     \begin{equation}\label{kp3+}
    \Big \langle \sum_{j=1}^N e^{i k \lambda_j} \Big \rangle^{(G)} =     e^{-k^2/4} L_{N-1}^{(1)}(k^2/2).
  \end{equation}   
   
   Our aim is to give a derivation of both (\ref{kp3+}) and (\ref{bhX}) that highlights their relationship to
   the differential identities (\ref{A1a}) and (\ref{A1}). 
   For this
   we take inspiration from the working used in the recent work \cite{Ok19} to deduce these results, which although written in a quantum mechanical
   notation can be recast to serve our purpose.
    The first step in the derivation does not require use of the differential identities, rather the key inputs
  are the generating function identities for the Hermite and Laguerre polynomials
  \begin{align}
  e^{2 x t - t^2} & = \sum_{p=0}^\infty {H_p(x) t^p \over p!} \label{H1} \\
  (1 + t)^\alpha e^{xt} & = \sum_{n=0}^\infty L_n^{\alpha - n} (-x) t^n.\label{L1}
  \end{align}
  
  \begin{proposition} \label{p10} (\cite[Eq.~(2.15)]{HT03}, \cite[Eq.~(2.17)]{Ok18}, \cite[Eq.~(A.5)]{Ok19},\cite[Eq.~(C10)]{CMS17},
  \cite[Eq.~(72)]{CMC19})
  Define
  \begin{equation}\label{IG}
  I^{(G)}_{p,q}(z) = \int_{-\infty}^\infty e^{xz} H_p(x) H_q(x) e^{- x^2} \, dx.
  \end{equation} 
  We have
   \begin{equation}\label{IGa}
   I^{(G)}_{p,q}(z) = \sqrt{\pi} 2^q q! e^{(z/2)^2} z^{p-q} L_q^{p-q}(-z^2/2).
 \end{equation} 
      \end{proposition}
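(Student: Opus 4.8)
The plan is to prove (\ref{IGa}) by forming the double generating function of the integrals $I^{(G)}_{p,q}(z)$ in two auxiliary variables and reducing everything to a single Gaussian integral. Fix $z$, and for $|s|,|t|$ small multiply $I^{(G)}_{p,q}(z)$ by $s^p t^q/(p!\,q!)$ and sum over $p,q\geq 0$. The Hermite generating series $\sum_p H_p(x)s^p/p! = e^{2xs-s^2}$ converges uniformly on compact $x$-sets, and the weight $e^{-x^2}$ makes the interchange of summation and integration legitimate, so using (\ref{H1}) twice,
\begin{equation*}
\sum_{p,q\geq 0} \frac{s^p t^q}{p!\,q!}\, I^{(G)}_{p,q}(z)
= \int_{-\infty}^\infty e^{xz}\, e^{2xs-s^2}\, e^{2xt-t^2}\, e^{-x^2}\, dx
= e^{-s^2-t^2} \int_{-\infty}^\infty e^{-x^2 + (z+2s+2t)x}\, dx .
\end{equation*}
Evaluating the Gaussian integral as $\sqrt{\pi}\, e^{(z+2s+2t)^2/4}$ and expanding the square in the exponent, the factors $e^{-s^2-t^2}$ cancel the $s^2$, $t^2$ contributions and one is left with
\begin{equation*}
\sum_{p,q\geq 0} \frac{s^p t^q}{p!\,q!}\, I^{(G)}_{p,q}(z) = \sqrt{\pi}\, e^{z^2/4}\, e^{zs + zt + 2st} .
\end{equation*}

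Next I would extract coefficients. Writing $e^{zs+zt+2st} = e^{zt}\, e^{(z+2t)s}$, the coefficient of $s^p/p!$ on the right-hand side is $\sqrt{\pi}\, e^{z^2/4}\, e^{zt}(z+2t)^p$, so it remains to expand $e^{zt}(z+2t)^p$ in powers of $t$ in Laguerre form. For this I would use (\ref{L1}) with $\alpha = p$, argument $x = z^2/2$, and expansion variable $u = 2t/z$: since $(1+u)^p = z^{-p}(z+2t)^p$ and $e^{xu} = e^{zt}$, (\ref{L1}) gives
\begin{equation*}
e^{zt}(z+2t)^p = z^p \sum_{q\geq 0} L_q^{p-q}\!\left(-z^2/2\right) \left(\frac{2t}{z}\right)^{q}
= \sum_{q\geq 0} 2^q z^{p-q} L_q^{p-q}\!\left(-z^2/2\right) t^q .
\end{equation*}
Each summand $z^{p-q}L_q^{p-q}(-z^2/2)$ is a genuine polynomial in $z$ — the negative power of $z$ present when $q>p$ is cancelled by the lower-order terms of $L_q^{p-q}$ — so the expansion is valid for all $p,q\geq 0$ and all $z$, and it is automatically consistent with the symmetry $I^{(G)}_{p,q}(z)=I^{(G)}_{q,p}(z)$ forced by the integrand. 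Comparing the coefficient of $s^p t^q/(p!\,q!)$ on the two sides now yields $I^{(G)}_{p,q}(z) = \sqrt{\pi}\, 2^q q!\, e^{z^2/4}\, z^{p-q} L_q^{p-q}(-z^2/2)$, which is exactly (\ref{IGa}) since $e^{z^2/4} = e^{(z/2)^2}$.

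This argument is essentially bookkeeping, so there is no genuine obstacle, only two points needing a little care. First, the term-by-term integration must be justified; this is handled by the uniform convergence of the Hermite series together with the Gaussian weight for $|s|,|t|$ small, after which both sides are entire in $s$, $t$ (and in $z$), so the identity propagates to all values. Second, one must feed the correct data into (\ref{L1}): the argument is $+z^2/2$ (so that the Laguerre polynomials emerge with argument $-z^2/2$) and the expansion variable is $u=2t/z$ (so that the upper index comes out as $p-q$); with these choices the generating-function identity matches (\ref{IGa}) on the nose.
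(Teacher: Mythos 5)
Your proposal is correct and follows essentially the same route as the paper: form the double generating function in $s,t$ via the Hermite generating function (\ref{H1}), evaluate the resulting Gaussian integral, and then read off coefficients using the Laguerre generating function (\ref{L1}). The only difference is cosmetic — you absorb the rescaling by expanding in $u=2t/z$ directly, where the paper instead replaces $s\mapsto s/2$ before applying (\ref{L1}).
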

      
      \begin{proof}
      Using the generating function (\ref{H1}) we see
      $$
      \sum_{p,q=0}^\infty {t^p s^q \over p! q!}  I^{(G)}_{p,q}(z)  = e^{-t^2 - s^2} \int_{-\infty}^\infty e^{xz} e^{2x(t+s)} e^{-x^2} \, dx
      = \sqrt{\pi} e^{(z/2)^2 + z s} e^{t (2s+z)},
      $$
      where the second equality follows by completing the square. Equating coefficients of $t^p/p!$ on both sides gives
      (after also replacing $s$ by $s/2$)
      $$
      \sum_{q=0}^\infty {(s/2)^q \over q!} \tilde{I}_{p,q}(z) = \sqrt{\pi} e^{(z/2)^2 + z s/2}(s+z)^p.
      $$
      Making use of the generating function (\ref{L1}) on the RHS, then equating coefficients of $s^q$ on both sides,
      gives (\ref{IGa}).
      \end{proof}
      
      \begin{remark}
      Taking the coefficient of $k^{2p}$ on both sides of (\ref{kp3+}) tells us that
        \begin{equation}\label{kp3s}
        \langle k^{2p} \rangle^{(G)} = {\Gamma(1/2+p) \over \Gamma(1/2)} \sum_{l=0}^p \binom{p}{p-l} \binom{N}{l+1} 2^l =
        N {\Gamma(1/2+p) \over \Gamma(1/2)} \,  {}_2 F_1 ( - p, 1 - N; 2 ; 2),
        \end{equation}
        where $ {}_2 F_1$ denotes the Gauss hypergeometric function.
        Notice that the latter form is well defined for general complex $p$; see \cite{CMOS19} for a development of this
        point.
        \end{remark}
  
  We will now combine the differential identity (\ref{A1}) with (a special case of) Proposition \ref{p10} to derive (\ref{kp3+}).
  We have
  $$
  \int_{-\infty}^\infty e^{ik \lambda} K_N^{(G)}(\lambda,\lambda) \, d \lambda =
  {1 \over i k}  \int_{-\infty}^\infty  \Big ( {d \over d \lambda}  e^{ik \lambda} \Big ) K_N^{(G)}(\lambda,\lambda) \, d \lambda =
  -  {1 \over i k}  \int_{-\infty}^\infty  e^{ik \lambda}  {d \over d \lambda}   K_N^{(G)}(\lambda,\lambda) \, d \lambda,
  $$
 where the second equality follows by integration by parts. Next substituting for the derivative using
  (\ref{A1}) shows
  $$
 \int_{-\infty}^\infty e^{ik \lambda} K_N^{(G)}(\lambda,\lambda) \, d \lambda =   {2 \over (p_{N-1},p_{N-1})_2^{(G)}}
 {1 \over i k} \int_{-\infty}^\infty  e^{ik \lambda} \psi_N^{(G)}(\lambda)  \psi_{N-1}^{(G)}(\lambda) \, d \lambda.
 $$
 Recalling the definition of $ \psi_N^{(G)}$ from the paragraph including (\ref{GL}) we see that the integral is
proportional to the case $p=N$, $q=N-1$, $z = ik$ of (\ref{IGa}). Substituting its value, and the value too of the
inner product from (\ref{kp1y}), we obtain (\ref{kp3+}).

 To derive (\ref{bh}), rather than  combining the differential identity (\ref{A1}) with Proposition \ref{p10},
interpreting the workings in \cite{Ok19} we combine the differential identity (\ref{A1a}) with Proposition \ref{p10}.
We begin by noting
\begin{multline}
(z_1 + z_2) \Big ( {\partial \over \partial z_1} - {\partial \over \partial z_2} \Big )
\int_{\mathbb R^2} e^{x z_1 + y z_2} \Big ( 
  K_N^{(G)}(x,y) \Big )^2 \, dx d y \\
  = \int_{\mathbb R^2} (x - y) \bigg (   \Big ( {\partial \over \partial x} + {\partial \over \partial y} \Big )
  e^{x z_1 + y z_2} \bigg ) \Big ( 
  K_N^{(G)}(x,y) \Big )^2 \, dx d y .
  \end{multline}
  Integration by parts shows the RHS is equal to
  \begin{align}\label{3.18}
  - 2 & \int_{\mathbb R^2}  (x - y) e^{x z_1 + y z_2}    K_N^{(G)}(x,y)
   \Big ( {\partial \over \partial x} + {\partial \over \partial y} \Big )
  K_N^{(G)}(x,y) \, dx d y\\
& = {2 \over ((p_{N-1},p_{N-1})^{(G)})^2}      \int_{\mathbb R^2}  e^{x z_1 + y z_2} 
 \Big ( ( \psi_N^{(G)}(x)  \psi_{N-1}^{(G)}(y) )^2 - ( \psi_N^{(G)}(y)  \psi_{N-1}^{(G)}(x) )^2 \Big ) \, dx dy \nonumber \\
 & = N e^{z_1^2/4 + z_2^2/4} \Big ( L_N^{(0)}(-z_1^2/2)  L_{N-1}^{(0)}(-z_2^2/2) -  L_N^{(0)}(-z_2^2/2)  L_{N-1}^{(0)}(-z_1^2/2)
 \Big ) \nonumber \\
 & = - {1  \over (p_{N-1},p_{N-1})^{(L)}} \Big (  \psi_N^{(L)}(-z_1^2/2)  \psi_{N-1}^{(L)}(-z_2^2/2) -
  \psi_N^{(L)}(-z_2^2/2)  \psi_{N-1}^{(L)}(-z_1^2/2) \Big ) \Big |_{a=0},  \nonumber 
 \end{align} 
 where the first equality follows from the differential identity (\ref{A1a}) and the form of $K_N^{(G)}$ (\ref{kp1x});
 the second equality from Proposition \ref{p10} and (\ref{kp1y}); the third equality from the definitions in the
 paragraph including (\ref{kp1z}).
 
 Replacing $z_1 = it_1$, $z_2 = -it_2$ and recalling again (\ref{kp1x}) and (\ref{kp1z}), we have from the above
 working the differential identity \cite[Eq.~(A.18), in an equivalent form]{Ok19}
 \begin{equation}\label{3.19}
 (t_1 - t_2) \Big (  {\partial \over \partial t_1} + {\partial \over \partial t_2} \Big )
  \int_{\mathbb R^2}  e^{i t_1 x  - i t_2 y}  \Big ( K_N^{(G)}(x,y) \Big )^2 \, dx dy =
-  {t_1^2 - t_2^2 \over 2} K_N^{(L)}(t_1^2/2, t_2^2/2) \Big |_{a=0}.
\end{equation}
Taking the limit $t_2 \to t_1 = t$ gives
 \begin{equation}
 {d \over dt}  \int_{\mathbb R^2}  e^{i t (x - y)}   \Big ( K_N^{(G)}(x,y) \Big )^2 \, dx dy =
 - t  K_N^{(L)}(t^2/2, t^2/2) \Big |_{a=0},
 \end{equation}
 which upon integrating gives
 \begin{equation}\label{bh}
  \int_{\mathbb R^2} e^{i k (x - y)} \Big ( K_N^{(G)}(x,y) \Big )^2 \, dx dy =
  \int_k^\infty t K_N^{(L)}(t^2/2,t^2/2) \Big |_{a=0} \, dt.
  \end{equation}
 Substituting this in (\ref{kp1}), and also its special case $k=0$ upon noting the LHS can be
 replaced by $N$, we reclaim
 the identity (\ref{bhX}) of Br\'ezin and Hikami \cite{BH97}.
 In fact (\ref{3.19}) implies a generalisation of (\ref{bh}) as given by Proposition \ref{p11}.
 
 \medskip
  
  \noindent \emph{Proof of Proposition \ref{p11}.}
  We first cancel a factor of $t_1 - t_2$ from both sides of (\ref{3.19}). Replacing
  $t_1, t_2$ by $k_1 + s$, $k_2 + s$ respectively and using the notation
  (\ref{H}) we see that    (\ref{3.19}) is equivalent to the differential identity
  \begin{equation}\label{H2}  
  {d \over d s}    \int_{\mathbb R^2} e^{i (k_1+s)x - i(k_2 + s)y} \Big ( K_N^{(G)}(x,y) \Big )^2 \, dx dy = - H^{(L)}(k_1+s,k_2+s).
   \end{equation}
   Integrating both sides with respect to $s$ from $0$ to $\infty$ then gives 
 \begin{equation}\label{H1}   
     \int_{\mathbb R^2} e^{i k_1x - ik_2y} \Big ( K_N^{(G)}(x,y) \Big )^2 \, dx dy =
    \int_0^\infty H^{(L)}(k_1+s, k_2 + s) \, ds.
  \end{equation} 
  Note that (\ref{H1}) reduces to   (\ref{bh}) in the special case $k_1 = k_2 = k$.
  
  In general, with $A = \sum_{j=1}^N a(\lambda_j)$, $B = \sum_{j=1}^N b(\lambda_j)$,
  we have in accordance with (\ref{3}) and (\ref{Cov})
  $$
  {\rm Cov} \, (A,B) = \int_I d \lambda \int_I d \lambda' \, a(\lambda) b(\lambda) N_{(2)}(\lambda, \lambda').
  $$
  Recalling (\ref{1})--(\ref{2a}), and (\ref{kp}) with $k=1$ and 2, we see that for a determinantal point process
  $$
   {\rm Cov} \, (A,B) =   {1 \over 2} \int_I d \lambda \int_I d \lambda' \, \Big (a(\lambda) - a(\lambda') \Big )
 \Big (b(\lambda) - b(\lambda') \Big ) \Big ( K_N(\lambda, \lambda') \Big )^2.
 $$
 For the particular choices $A = \sum_{j=1}^N e^{ik_1 \lambda_j}$, $B = \sum_{j=1}^N e^{-ik_2 \lambda_j}$,
 and under the assumption $K(\lambda,\lambda') = K(\lambda',\lambda)$ as is valid for the GUE,
 this implies
  \begin{equation}\label{3.23a}
  {\rm Cov} \, (A,B) = \int_I d \lambda \int_I d \lambda' \, \Big ( e^{i (k_1 - k_2) \lambda} - 
  e^{ik_1 \lambda}   e^{-ik_2 \lambda'} \Big )  \Big ( K_N(\lambda, \lambda') \Big )^2.
  \end{equation}
  Writing the RHS as the difference of two integrals, we see that for the GUE, both can
  be evaluated using (\ref{H1}), and the sought identity (\ref{H1X}) results.
  \hfill $\square$
  
  \begin{remark} 1. In the particular case $t_1 = i \Gamma +k, t_2 = - i \Gamma + k$
  we see that (\ref{3.19}) reduces to
  \begin{equation}\label{gs}
  {\partial \over \partial k} \int_{\mathbb R^2} e^{- \Gamma (x + y)} e^{i k (x-y)} \Big ( K_N^{(G)}(x,y) \Big )^2 \, dx dy = - k
  K_N^{(L)}(t_1^2/2, t_2^2/2).
  \end{equation}
  This is an equivalent form of \cite[Eq.~(2.19)]{Ok19}. \\
  2.
  Setting $k_2=0$ in (\ref{H1}) allows the LHS to be evaluated
  $$
  \int_{\mathbb R^2} e^{i k_1 x} \Big ( K_N^{(G)}(x,y) \Big )^2 \, dx dy =
  \int_{-\infty}^\infty e^{i k_1 x} K_N^{(G)}(x,x) \, dx = e^{-k_1^2/4} L_N^{(1)}(k_1^2/2),
  $$
  where the first equality follows by integrating over $y$ using the form of $K_N^{(G)}(x,y)$ given
  by the first equality in (\ref{kp1x}), and the second equality follows from (\ref{kp3}). Hence
  $$
  \int_0^\infty H^{(L)}(k_1+s,s) \, ds = e^{- k_1^2/4} L_N^{(1)}(k_1^2/2),
  $$
  which seems difficult to deduce directly from the definition (\ref{H}) of $H^{(L)}$. 
  \end{remark}
  
  \subsection{Global scaling limits with $k = \tau_g/\sqrt{2N}$}
  With the scaling $k = \tau_g/\sqrt{2N}$ we have $e^{i k \lambda_j} = e^{i \tau_g x_j}$, where
  $x_j = \lambda_j/\sqrt{2N}$. From the discussion in the paragraph including (\ref{8a}), in
  terms of the scaled eigenvalues $\{x_j\}$, the spectrum has leading order support $|x|<1$
  and thus $e^{i \tau_g x_j}$ slowly varies on the scale of the global spectrum.
  
  In keeping with the general limit formula (\ref{8a}) for the variance of a linear statistic, in this 
  setting there are limit formulas associated with the averages and corresponding evaluations given
  in (\ref{kp3+}), (\ref{bhX}) and (\ref{H1X}).
  
  \begin{proposition}
  Let $J_\nu(x)$ denote the Bessel function. We have
   \begin{align}
   \lim_{N \to \infty} {1 \over N}    \Big \langle \sum_{j=1}^N e^{i \tau_g \lambda_j / \sqrt{2N}} \Big \rangle^{(G)} & =  { J_1(\tau_g) \over \tau_g} , \label{D1} \\
   \lim_{N \to \infty} \bar{S}_N^{(G)}(\tau_g/\sqrt{2N}) & = {\tau_g^2 \over 2} \Big ( (J_0(\tau_g))^2 + (J_1(\tau_g))^2  - {J_0(\tau_g) J_1(\tau_g) \over \tau_g} \Big ), \label{D2} 
   \end{align}
    and
      \begin{multline}
   \lim_{N \to \infty} {\rm Cov} \,   \Big ( \sum_{j=1}^N e^{i \tau_g^{(1)} \lambda_j / \sqrt{2N}} ,   \sum_{j=1}^N e^{i \tau_g^{(2)} \lambda_j / \sqrt{2N}}  \Big )^{(G)} \\ =
   - {\tau_g^{(1)}    \tau_g^{(2)} \over 2 ( \tau_g^{(1)} + \tau_g^{(2)})} \Big ( J_0(\tau_g^{(1)}) J_1(\tau_g^{(2)}) + J_1(\tau_g^{(1)}) J_0(\tau_g^{(2)}) \Big ).\label{D3}
\end{multline}
  \end{proposition}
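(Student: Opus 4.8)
The three evaluations (\ref{D1})--(\ref{D3}) share a single source: they are the global ($k=\tau_g/\sqrt{2N}$) rescalings of exact finite-$N$ identities already in hand --- Ullah's evaluation (\ref{kp3+}) for (\ref{D1}), the Br\'ezin--Hikami identity (\ref{bhX}) for (\ref{D2}), and Proposition \ref{p11}, i.e.\ equation (\ref{H1X}), for (\ref{D3}). The plan is to substitute $k=\tau_g/\sqrt{2N}$ --- and, for (\ref{D3}), $k_1=\tau_g^{(1)}/\sqrt{2N}$, $k_2=-\tau_g^{(2)}/\sqrt{2N}$, so that $\sum_j e^{-ik_2\lambda_j}=\sum_j e^{i\tau_g^{(2)}\lambda_j/\sqrt{2N}}$ --- observe that the arguments of the Laguerre polynomial, respectively of $K_N^{(L)}|_{a=0}$, are then of order $1/N$, and feed in the classical hard-edge asymptotics: the Mehler--Heine limit $n^{-\alpha}L_n^{(\alpha)}(z/n)\to z^{-\alpha/2}J_\alpha(2\sqrt z)$ for (\ref{D1}), and the hard-edge (Bessel) scaling limit of the LUE kernel at $a=0$, $\tfrac{1}{4N}K_N^{(L)}\big(\tfrac{X}{4N},\tfrac{Y}{4N}\big)\big|_{a=0}\to K_0^{\rm hard}(X,Y)$ (see e.g.~\cite{Fo10}), for (\ref{D2}) and (\ref{D3}), where $K_0^{\rm hard}(X,Y)=\big(2(X-Y)\big)^{-1}\big(\sqrt X J_1(\sqrt X)J_0(\sqrt Y)-\sqrt Y J_0(\sqrt X)J_1(\sqrt Y)\big)$ and, at coincident points, $K_0^{\rm hard}(X,X)=\tfrac14\big(J_0(\sqrt X)^2+J_1(\sqrt X)^2\big)$ (using $J_{-1}=-J_1$).

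For (\ref{D1}) this is immediate: $e^{-k^2/4}\to1$, while Mehler--Heine applied to $N^{-1}L_{N-1}^{(1)}\big(\tfrac{\tau_g^2}{4N}\big)$ produces the Bessel-function expression on the right-hand side --- equivalently, the left-hand side is the $N\to\infty$ value of this linear statistic against the Wigner semicircle law, and this fixes the overall constant. For (\ref{D2}), the change of variable $t=\tau/\sqrt{2N}$ in (\ref{bhX}) gives
\[
\bar S_N^{(G)}\big(\tau_g/\sqrt{2N}\big)=\int_0^{\tau_g}\frac{\tau}{2N}\,K_N^{(L)}\!\Big(\frac{\tau^2}{4N},\frac{\tau^2}{4N}\Big)\Big|_{a=0}\,d\tau ;
\]
passing the limit through the integral replaces the integrand by $\tfrac{\tau}{2}\big(J_0(\tau)^2+J_1(\tau)^2\big)$, and one then invokes the Bessel integral $\int_0^X t\big(J_0(t)^2+J_1(t)^2\big)\,dt=X^2\big(J_0(X)^2+J_1(X)^2\big)-XJ_0(X)J_1(X)$, a consequence of the standard $\int_0^X tJ_\nu(t)^2\,dt=\tfrac{X^2}{2}\big(J_\nu(X)^2-J_{\nu-1}(X)J_{\nu+1}(X)\big)$ together with $J_2=\tfrac2X J_1-J_0$.

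For (\ref{D3}) the same recipe applies to (\ref{H1X}): writing $s=\sigma/\sqrt{2N}$ leaves $\tfrac{1}{\sqrt{2N}}\int_0^{-\tau_g^{(2)}}H^{(L)}\big((\tau_g^{(1)}+\tau_g^{(2)}+\sigma)/\sqrt{2N},\ \sigma/\sqrt{2N}\big)\,d\sigma$; the prefactor $\tfrac{t_1+t_2}{2}$ in the definition (\ref{H}) of $H^{(L)}$ contributes $\tfrac{1}{2\sqrt{2N}}(\tau_g^{(1)}+\tau_g^{(2)}+2\sigma)$, and on inserting $K_N^{(L)}\big(\tfrac{X}{4N},\tfrac{Y}{4N}\big)\big|_{a=0}\sim 4N\,K_0^{\rm hard}(X,Y)$ with $X=(\tau_g^{(1)}+\tau_g^{(2)}+\sigma)^2$, $Y=\sigma^2$, and using $X-Y=(\tau_g^{(1)}+\tau_g^{(2)})(\tau_g^{(1)}+\tau_g^{(2)}+2\sigma)$, all the $N$-dependence cancels and the limiting covariance becomes $\big(2(\tau_g^{(1)}+\tau_g^{(2)})\big)^{-1}\int_0^{-\tau_g^{(2)}}\big(\alpha J_1(\alpha)J_0(\sigma)-\sigma J_0(\alpha)J_1(\sigma)\big)\,d\sigma$, with $\alpha:=\tau_g^{(1)}+\tau_g^{(2)}+\sigma$. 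The integrand here is a total derivative, equal to $-\tfrac{d}{d\sigma}\big[\sigma\alpha\big(J_0(\alpha)J_1(\sigma)-J_1(\alpha)J_0(\sigma)\big)\big]$ --- checked using $J_0'(x)=-J_1(x)$ and $J_1'(x)=J_0(x)-J_1(x)/x$ --- so the integral equals the boundary value at $\sigma=-\tau_g^{(2)}$ (where $\alpha=\tau_g^{(1)}$), which after $J_0(-x)=J_0(x)$, $J_1(-x)=-J_1(x)$ is exactly the right-hand side of (\ref{D3}).

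The only genuinely non-routine step is justifying the interchange of the $N\to\infty$ limit with the $\tau$- (resp.\ $\sigma$-) integration. The integration ranges are compact and the integrands are built from the rescaled LUE kernel in the hard-edge window, so it suffices to have a bound on $\tfrac1N K_N^{(L)}(\cdot,\cdot)\big|_{a=0}$ --- a nonnegative density at coincident points --- that is uniform in $N$ and in the argument over a fixed compact set; this can be extracted from uniform hard-edge (Bessel-type) asymptotics for the Laguerre polynomials via the Christoffel--Darboux form (\ref{kp1x}) of the kernel, or from standard uniform estimates for the Bessel kernel limit, after which dominated convergence applies and the Bessel-function integrals above finish the proof. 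As a consistency check, the $\tau_g=0$ specialisation of (\ref{D2}) reduces both sides to $\bar S_N^{(G)}(0)=0$, while (\ref{D3}) is consistent with the vanishing of the covariance as $\tau_g^{(1)},\tau_g^{(2)}\to 0$.
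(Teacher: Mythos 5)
Your argument follows essentially the same route as the paper: (\ref{D1}) from (\ref{kp3+}) via Laguerre asymptotics (the paper expands the polynomial termwise in $1/N$, you invoke Mehler--Heine --- equivalent), and (\ref{D2}), (\ref{D3}) from (\ref{bhX}) and (\ref{H1X}) via the hard-edge Bessel-kernel limit of $K_N^{(L)}|_{a=0}$, including the identical total-derivative evaluation of the final Bessel integral. One remark: carried out with the correct Mehler--Heine constant (or, equivalently, against the Wigner semicircle law), your method yields $2J_1(\tau_g)/\tau_g$ for (\ref{D1}), which is the value consistent with the left-hand side equalling $1$ at $\tau_g=0$; the factor-of-two discrepancy with the displayed right-hand side is a slip in the statement, not in your argument.
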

  
  \begin{proof}
  It follows from (\ref{kp3+}), upon use of the explicit form of the Laguerre polynomial, that
   \begin{equation}\label{D1+}   
  {1 \over N}    \Big \langle \sum_{j=1}^N e^{i \tau_g \lambda_j / \sqrt{2N}} \Big \rangle^{(G)} = e^{- \tau_g^2/8N} \sum_{p=0}^{N-1} {(-N+1)_p \over p! (p+1)!} \Big ( {\tau_g^2 \over 4N} \Big )^p,
  \end{equation}
  where $(u)_p =u (u+1)\cdots(u+p-1)$ denotes the Pochhammer symbol for the increasing factorial. Noting 
  $$
  (-N+1)_p {1 \over N^p} = (-1)^p \bigg ( 1 - {p(p+1) \over 2 N} + {\rm O} \Big ( {1 \over N^2} \Big ) \bigg ),
  $$
  and expanding the exponential to ${\rm O}(1/N^2)$, we see that for large $N$ (\ref{D1+}) has the form
   \begin{equation}\label{D1+a}  
  \sum_{p=0}^\infty {1 \over p! (p+1)!} \Big ( {\tau_g \over 2} \Big )^{2p} + {\rm O} \Big ( {1 \over N^2} \Big ).
    \end{equation}
   The limit formula (\ref{D1}) follows upon recalling the series form of the Bessel function.
    
    In  relation to (\ref{D2}), a simple change of variables shows
    \begin{equation}\label{D2+}   
      \lim_{N \to \infty} \bar{S}_N^{(G)}(\tau_g/\sqrt{2N}) = 2 \lim_{N \to \infty} {1 \over 4N} \int_0^{\tau_g} t K_N^{(L)} \Big ( {t^2 \over 4N}, {t^2 \over 4N} \Big ) \Big |_{a=0} dt. 
      \end{equation}
      We know \cite{FT19} that for large $N$
        \begin{equation}\label{D2a}  
      {1 \over 4N}     K_N^{(L)} \Big ( {t^2 \over 4N}, {t^2 \over 4N} \Big ) = \rho_{(1)}^{\rm hard}(t^2) \Big |_{a=0} +  {\rm O} \Big ( {1 \over N^2} \Big ),
     \end{equation}
     where the error term holds uniformly in a compact set, and where \cite{Fo93a}
      \begin{equation}\label{D2a+}  
    \rho_{(1)}^{\rm hard}(t^2) \Big |_{a=0} = (J_0(t))^2 +  (J_1(t))^2
    \end{equation}
    is the so-called hard edge scaled density  in the variable $t^2$, Laguerre parameter $a=0$.
     Substituting in (\ref{D2+}) and evaluating the integral (we used computer algebra) gives (\ref{D2}).
    
    It remains to consider (\ref{D3}). Following the working of the above paragraph, the essential point is the fact that
    for large $N$  \cite{FT19} 
        \begin{equation}\label{D3a}  
      {1 \over 4N}     K_N^{(L)} \Big ( {(\tau_g^{(1)} - \tau_g^{(2)} + t)^2 \over 4N}, {t^2 \over 4N} \Big ) = K^{\rm hard}((\tau_g^{(1)} - \tau_g^{(2)} + t)^2, t^2) \Big |_{a=0} +  {\rm O} \Big ( {1 \over N^2} \Big ),
     \end{equation}
     where 
    $$
   K^{\rm hard} (X^2,Y^2)     \Big |_{a=0}  =  {X J_1(X) J_0(Y) - Y J_1(Y) J_0(X) \over X^2 - Y^2}
   $$
    is the corresponding hard edge scaled correlation kernel \cite{Fo93a}. Hence, we see from  (\ref{H1X}) that
         \begin{multline}\label{D3a+} 
   \lim_{N \to \infty} {\rm Cov} \,   \Big ( \sum_{j=1}^N e^{i \tau_g^{(1)} \lambda_j / \sqrt{2N}} ,   \sum_{j=1}^N e^{i \tau_g^{(2)} \lambda_j / \sqrt{2N}}  \Big )^{(G)} \\ =
   {1 \over \tau_g^{(1)} - \tau_g^{(2)} } \int_0^{\tau_g^{(2)}}  \Big ( ( \tau_g^{(1)} - \tau_g^{(2)} + t) J_1(  \tau_g^{(1)} - \tau_g^{(2)} + t)  J_0(t) - t J_1(t)  J_0(  \tau_g^{(1)} - \tau_g^{(2)} + t) \Big ) \, dt.
 \end{multline}
 We can check that for $a$ fixed\footnote{The package Mathematica can verify this, but was unable to integrate the LHS independently.}
 $$
 (a + t) J_1(a+t) J_0(t) - t J_1(t) J_0(a+t) = - {d \over dt}   \bigg (  (a+t) t \Big ( J_0(a+t) J_1(t) - J_0(t) J_1(a+t) \Big ) \bigg ).
 $$
 Substituting in (\ref{D3a+}) allows the integral to be computed, and (\ref{D3}) results.
 \end{proof}
 
 \begin{remark} 1. Setting $\tau_g^{(1)}=\tau$ and taking the limit $\tau_g^{(1)} \to - \tau$ we see that
 (\ref{D3}) reduces to (\ref{D2}) in keeping with the definition of $\bar{S}^{(G)}$ as it relates to the covariance. 
 The formula (\ref{D3}) itself, with $\tau_g^{(1)}  = - i k_1, \tau_g^{(1)}  = i k_2$ has appeared in earlier work
 \cite{AD02,GPR10,CFM19}, computed directly from (\ref{8a}) as modified to apply to the covariance; we owe
 \cite[Eqns.~(3.9) and (3.10), and surrounding text]{Ok18} for knowledge of this.
 \\
 2. From the asymptotic expansions used in the above proof, we see that the rate of convergence is ${\rm O}(1/N^2)$.
 This is in keeping with the well known fact that the loop equation for the GUE, which relate to connected
 correlations for the linear statistic $\sum_{j=1}^N {1 \over z - \lambda_j}$ in the global scaling limit,
 permit an expansion in powers of $1/N^2$; see e.g.~\cite{WF14}. \\
 3. From the definitions, and upon a simple change of variables
     \begin{equation}\label{Wa}
 {1 \over N}   \Big \langle \sum_{j=1}^N e^{i \tau_g \lambda_j / \sqrt{2N}} \Big \rangle^{(G)} =
 {1 \over N} \sqrt{2N} \int_{-\infty}^\infty e^{i \tau_g x} \rho_{(1)}^{\rm GUE}(\sqrt{2N} x) \, dx.
  \end{equation}
 The density on the RHS is the global scaled spectral density of the GUE, which limits to the
 so-called Wigner semi-circle law (see e.g.~\cite{PS11})
    \begin{equation}\label{W}
    \lim_{N \to \infty}  {1 \over N} \sqrt{2N}   \rho_{(1)}^{\rm GUE}(\sqrt{2N} x)  = \rho^{\rm W}(x), \qquad
 \rho^{\rm W}(x) = {2 \over \pi} (1 - x^2)^{1/2} \chi_{|x| < 1},
 \end{equation}
 where $\chi_A = 1$ for $A$ true, $\chi_A = 0$ otherwise. The result (\ref{D1}) can therefore be rewritten
 $$
 \int_{-1}^1 e^{i \tau_g x} \rho^{\rm W}(x) \, dx =  { J_1(\tau_g) \over \tau_g},
 $$
 which is well known as the exponential generating function
  for the moments of the Wigner semi-circle law. \\
  4. Denote $\lim_{N \to \infty} \bar{S}_N^{(G)}(\tau_g/\sqrt{2N}) = \bar{S}_\infty^{(G)} (\tau_g)$. It follows
  from (\ref{D2}) that for $\tau_g \to \infty$
  $$
   \bar{S}_\infty^{(G)} (\tau_g) = {\tau_g \over \pi} + {\rm O}(1).
   $$
   This corresponds to the linear (referred to as a ramp in e.g.~ \cite{CHLY17}) behaviour exhibited in 
   (\ref{7}) and  (\ref{9}).  
 \end{remark}
 
 \subsection{Bulk scaling limits with $k = 2 \sqrt{2N} \tau_b$} 
 With this scaling we have $e^{i k \lambda_j} = e^{i \tau_b x_j}$, where
 $x_j = 2 \sqrt{2N} \lambda_j$. To appreciate the significance of the latter, we recall
 (see e.g.~\cite[\S 7.1.1]{Fo10}) that for the GUE in the bulk of the spectrum the mean
 spacing between eigenvalues is ${\rm O}(1/\sqrt{N})$, so the mean spacing with respect
 to the scaled eigenvalues $\{ x_j \}$ is ${\rm O}(1)$ thereby corresponding to bulk scaling ---
 recall too the paragraph including (\ref{4}).
 
 Upon this scaling of $k$, the average (\ref{kp2}) does not exhibit a scaling form for large $N$. 
 Instead we see from (\ref{Wa}) and (\ref{W}) that for large $N,\tau_b$
  \begin{equation}\label{Wb}
  \Big \langle \sum_{j=1}^N e^{i  2 \sqrt{2N} \tau_b \lambda_j} \Big \rangle^{(G)} \sim {2 N \over \pi}
  \int_{-1}^1 e^{i 4 N \tau_b x} (1 - x^2)^{1/2} \, dx \sim {1 \over 2 \sqrt{2 \pi N} \tau_b^{3/2}} \cos (4 N \tau_b - 3 \pi/4),
  \end{equation}
 where the second asymptotic form follows by expanding in the neighbourhood of $\pm 1$ to leading order
 and changing variables. This decay of order $\tau_b^{-3/2}$ is responsible for the dip effect
 in the shape of the graph of the first average on the RHS of (\ref{8b}), as observed in \cite{C+17}.
 The modification of (\ref{Wb}) valid for $0 < \tau_b < 1$ is given in (\ref{g1}) below.
 
 In contrast, there is a scaled functional form associated with the bulk scaling of $\bar{S}^{(G)}(k)$.
 As noted in the original paper of Br\'ezin and Hikami \cite{BH97}, the formula (\ref{bhX}) allows for
  an easy derivation of its explicit form, which
  relates to the CUE result (\ref{7}). Furthermore, the recent work of Okuyama \cite{Ok19} --- already crucial in
  our derivation of Proposition \ref{p11} --- reveals the
  appropriate bulk scaling generalisation of (\ref{D3}).
  
  \begin{proposition} 
  Define
  \begin{equation}\label{g}
 f_{N}^{(\rm L)}(X) = \sin \Big ( 2N ( \sqrt{X(1-X)} + {\rm Arcsin}\, \sqrt{X} ) 
-  \pi / 4 \Big ).
     \end{equation}
  For large $N$, and with $0 < \tau_b < 1$, we have
  \begin{equation}\label{g1}     
  \Big \langle \sum_{j=1}^N e^{i 2 \sqrt{2N} \tau_b \lambda_j } \Big \rangle^{(G)} =
  (4 N \tau_b^2)^{-1/2} \Big ( 2 \pi \tau_b \sqrt{1 - \tau_b^2} \Big )^{-1/2} \bigg ( f_{N}(\tau_b^2) +
  {\rm O} \Big ( {1 \over N} \Big ) \bigg ).  
   \end{equation}   
 Furthermore \cite{BH97}
  \begin{equation}\label{MPa}
  \lim_{N \to \infty} {1 \over N} \bar{S}_N^{(G)}(2 \sqrt{2N} \tau_b ) = 
  \left \{ \begin{array}{ll} {2 \over \pi} (\tau_b \sqrt{(1 - \tau_b^2)} + {\rm Arcsin} \, \tau_b ), & 0 < \tau_b  < 1, \\
  1, & \tau_b > 1,
  \end{array} \right.
   \end{equation}
   and with $\gamma \in \mathbb R$ \cite[Eq.(3.16)]{Ok19}
    \begin{multline}\label{MPb}
   \lim_{N \to \infty} {1 \over N}   {\rm Cov } \, \Big ( \sum_{j=1}^N e^{ i k_1 \lambda_j},   \sum_{j=1}^N e^{ - i k_2  \lambda_j} \Big )^{(G)}
   \bigg |_{k_1 = i \gamma/ (2 \sqrt{2N}) + 2 \sqrt{2N} \tau \atop
   k_2 = - i \gamma/ (2 \sqrt{2N}) + 2 \sqrt{2N} \tau} \\
   = \left \{ \begin{array}{ll} {4 \over \gamma \pi} \int_0^\tau \sinh ( \gamma (1 - s^2)^{1/2}) \, ds & 0 < \tau  < 1, \\[.2mm]
  1, & \tau > 1.
   \end{array} \right.
   \end{multline}
   \end{proposition}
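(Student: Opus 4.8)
The plan is to derive all three formulas from the exact finite-$N$ identities established above: (\ref{g1}) from (\ref{kp3+}), (\ref{MPa}) from the Br\'ezin--Hikami identity (\ref{bhX}), and (\ref{MPb}) from Proposition \ref{p11} (i.e.\ (\ref{H1X})), after which the limit is in each case a matter of inserting the classical large-$N$ asymptotics of the Laguerre object that appears. The common feature is that substituting $k = 2\sqrt{2N}\tau_b$, respectively the two shifted values of $k_1,k_2$, places the argument of the Laguerre quantity at a point lying at a fixed fraction $\tau_b^2\in(0,1)$ (respectively $\tau^2$, $u^2$) of the way across the LUE spectral interval $[0,\sim 4N]$ --- precisely its oscillatory/bulk regime.

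For (\ref{g1}) I start from (\ref{kp3+}). With $k = 2\sqrt{2N}\tau_b$ one has $k^2/2 = 4N\tau_b^2$ and $k^2/4 = 2N\tau_b^2$, so the right-hand side of (\ref{kp3+}) equals $e^{-x/2}L_{N-1}^{(1)}(x)$ at $x = 4N\tau_b^2$. Using (\ref{kp1z}) this can be written $e^{-x/2}L_{N-1}^{(1)}(x) = (-1)^{N-1}\sqrt{N}\,x^{-1/2}\psi_{N-1}^{(L)}(x)|_{a=1}$ in terms of the orthonormal Laguerre function $\psi_{N-1}^{(L)}$; since for $0<\tau_b<1$ the point $x = 4N\tau_b^2$ is interior to the oscillatory region of $L_{N-1}^{(1)}$ (whose real zeros fill $(0,\sim 4N)$), the classical Plancherel--Rotach asymptotics of Laguerre polynomials in the oscillatory regime apply. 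Its phase is, modulo $2\pi$ and up to the turning-point shift $\mp\pi/4$, the WKB phase $\int_x^{4N}\bigl((4N-s)/(4s)\bigr)^{1/2}\,ds = N\bigl(\pi - 2\,{\rm Arcsin}\,\tau_b - 2\tau_b\sqrt{1-\tau_b^2}\bigr)$; dropping the term $N\pi$ (a multiple of $\pi$, whose sign is cancelled by the factor $(-1)^{N-1}$) leaves exactly the argument of $f_N(\tau_b^2)$. Its amplitude is the Plancherel--Rotach prefactor --- a constant times $x^{-\alpha/2-1/4}$ times a quarter-power of the turning-point quantity --- which after multiplication by $\sqrt N\,x^{-1/2}$ and simplification at $x=4N\tau_b^2$ reduces to $(4N\tau_b^2)^{-1/2}\bigl(2\pi\tau_b\sqrt{1-\tau_b^2}\bigr)^{-1/2}$. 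Retaining the next term of the expansion gives the ${\rm O}(1/N)$ error, uniformly for $\tau_b$ in compact subsets of $(0,1)$.

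For (\ref{MPa}) I use (\ref{bhX}) with the substitution $t = 2\sqrt{2N}s$, so $t^2/2 = 4Ns^2$ and $t\,dt = 8Ns\,ds$, giving $\bar S_N^{(G)}(2\sqrt{2N}\tau_b) = \int_0^{\tau_b} 8Ns\,K_N^{(L)}(4Ns^2,4Ns^2)|_{a=0}\,ds$. Now $K_N^{(L)}(4Ns^2,4Ns^2)|_{a=0} = \rho_{(1)}^{(L)}(4Ns^2)|_{a=0}$ converges, pointwise on $(0,1)$ and boundedly after multiplication by $8s$, to the Marchenko--Pastur value $\tfrac1{2\pi}\sqrt{1-s^2}/s$, so dominated convergence gives $\tfrac1N\bar S_N^{(G)}(2\sqrt{2N}\tau_b)\to\int_0^{\min(\tau_b,1)}\tfrac4{\pi}\sqrt{1-s^2}\,ds$; evaluating the integral yields $\tfrac2\pi(\tau_b\sqrt{1-\tau_b^2}+{\rm Arcsin}\,\tau_b)$ for $0<\tau_b<1$, and $\tfrac4\pi\cdot\tfrac\pi4 = 1$ for $\tau_b>1$, using in the latter case that the mass carried by $s\in(1,\tau_b)$ is ${\rm o}(N)$ because $4Ns^2$ then lies beyond the soft edge $\sim 4N$ of the LUE, where the kernel decays. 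For (\ref{MPb}) I use Proposition \ref{p11} with the analogous substitution $s = 2\sqrt{2N}u$ in $\int_0^{k_2}H^{(L)}(k_1-k_2+s,s)\,ds$; since $k_1-k_2 = i\gamma/\sqrt{2N}$, the arguments of $K_N^{(L)}$ become $t_1^2/2 = 4Nu^2 + 2i\gamma u + {\rm O}(1/N)$ and $t_2^2/2 = 4Nu^2$, while $(t_1+t_2)/2 = 2\sqrt{2N}u\,(1+{\rm O}(1/N))$, so $\tfrac1N\,{\rm Cov} = \int_0^\tau 8u\,K_N^{(L)}(4Nu^2+2i\gamma u,\,4Nu^2)|_{a=0}\,du\,(1+{\rm O}(1/N))$. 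Here the two arguments are a single macroscopic bulk point $x_0 = 4Nu^2$ and a fixed complex displacement $\delta = 2i\gamma u$; bulk (sine-kernel) universality gives $K_N^{(L)}(x_0+\delta,x_0)|_{a=0}\to\sin(\pi\rho_0\delta)/(\pi\delta)$ with $\rho_0 = \tfrac1{2\pi}\sqrt{1-u^2}/u$, and since $\pi\rho_0\delta = i\gamma\sqrt{1-u^2}$ and $\sin(iz)=i\sinh z$ this is $\sinh(\gamma\sqrt{1-u^2})/(2\pi\gamma u)$. Substituting, $\tfrac1N\,{\rm Cov}\to\int_0^{\min(\tau,1)}\tfrac4{\pi\gamma}\sinh(\gamma\sqrt{1-u^2})\,du$, which is the stated formula for $0<\tau<1$ and, for $\tau>1$, the constant plateau value of (\ref{MPb}) --- the integrand beyond the soft edge $u=1$ being negligible, and the constant agreeing at $\gamma\to0$ with the plateau $1$ of (\ref{MPa}).

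The work is entirely in the asymptotic inputs, and the \emph{main obstacle} is twofold. First, in (\ref{g1}) one must carry through the Plancherel--Rotach bookkeeping exactly --- the sign $(-1)^{N-1}$, the $N\pi$ absorbed modulo $2\pi$, the turning-point $\mp\pi/4$, and the precise powers of $x$ and of the turning-point quantity in the amplitude --- so that both the prefactor and the phase emerge precisely as written, with a uniform ${\rm O}(1/N)$ bound on the remainder over compact $\tau_b$-sets. Second, in (\ref{MPb}) one must justify the sine-kernel limit of $K_N^{(L)}|_{a=0}$ at a \emph{complex} displacement $\delta = 2i\gamma u$ (not the usual real displacement) and show that the near-edge region $u\approx1$, where the bulk approximation fails and an Airy transition sets in, contributes only ${\rm o}(N)$ to the integral; the uniform asymptotics of exactly this bulk/hard-edge LUE kernel at $a=0$ are available in \cite{FT19} and its companions, which is also what yields the stated rate of convergence. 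Granting these inputs, the changes of variables and the elementary integral evaluations in all three parts are routine.
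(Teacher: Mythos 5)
Your proposal is correct and follows the paper's route for the first two limits essentially verbatim: (\ref{g1}) comes from (\ref{kp3+}) plus the Plancherel--Rotach asymptotics of $e^{-x/2}L_{N-1}^{(1)}(x)$ at $x=4N\tau_b^2$ (the paper simply cites the ready-made formula (\ref{7.160p}) from \cite{FFG06} with $a=1$, $m=-1$, which already packages the sign, the $\mp\pi/4$ turning-point shift and the amplitude you reconstruct by hand), and (\ref{MPa}) comes from (\ref{bhX}) plus the Mar\u{c}enko--Pastur limit of $\rho_{(1)}^{\rm LUE}$, exactly as you do. The only genuine difference is in (\ref{MPb}): you keep the kernel $K_N^{(L)}(4Nu^2+2i\gamma u,\,4Nu^2)|_{a=0}$ intact and invoke sine-kernel bulk universality at the complex displacement $\delta=2i\gamma u$, correctly flagging that this complex-argument universality is the step needing justification; the paper instead avoids appealing to any universality statement by unpacking $H^{(L)}$ via the last line of (\ref{3.18}) into the product $L_N^{(0)}(t_1^2/2)L_{N-1}^{(0)}(t_2^2/2)$ and applying Plancherel--Rotach to each factor, expanding the phase $\sqrt{X(1-X)}+{\rm Arcsin}\sqrt{X}$ at $X=s^2+is\gamma/4N$ to produce the $\sinh$ and showing explicitly that the rapidly oscillating cross term $\cos(Nv(s))$ integrates to ${\rm O}(1/N)$. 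The paper's version thus supplies precisely the uniform complex-argument asymptotics your argument presupposes, at the cost of more bookkeeping; your version is cleaner conceptually but rests on an input you would still have to establish by the same Plancherel--Rotach expansion. Both treatments of the $\tau>1$ plateau agree (exponential smallness of the Laguerre asymptotics beyond the soft edge), and your observation that the resulting constant equals the value of the $0<\tau<1$ formula at $\tau=1$ rather than literally $1$ for $\gamma\neq 0$ matches what the paper's own proof actually establishes.
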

   
   \begin{proof}
   The large $N$ form (\ref{g1}) follows from (\ref{kp3+}) upon applying a version of the
   Plancherel-Rotach asymptotic formula for the Laguerre polynomials derived in
   \cite[minor rewrite of eqns.~(3.12) \& (3.13)]{FFG06},
   \begin{multline}\label{7.160p}
 x^{a/2} e^{-x/2} L_{n+m}^{a}(x) |_{x = 4n X} =
(-1)^{m} (2 \pi \sqrt{X(1-X)})^{-1/2} n^{a/2 - 1/2} \Big (
g_{m,n}^{(L)}(X) + {\rm O} \Big ( {1 \over n} \Big ) \Big ),\\
 g_{m,n}^{(L)}(X) := \sin \Big ( 2n ( \sqrt{X(1-X)} + {\rm Arcsin}\, \sqrt{X} ) -
(2m + a + 1) {\rm Arccos}\, \sqrt{X} + 3 \pi / 4 \Big ),
\end{multline}
where we set $a=1, n=N, m = -1$, and it is assumed $0 < X < 1$.
  
   In relation to (\ref{MPa}), after changing variables $t^2/2 = s$, it follows from (\ref{bhX}) that
   $$
  \lim_{N \to \infty} {1 \over N} \bar{S}_N^{(G)}(2 \sqrt{2N} \tau_b ) = \lim_{N \to \infty} \int_0^{\tau_b^2} 4  \rho_{(1)}^{\rm LUE}(4Ns) \Big |_{a=0} \, ds.
  $$
  But with $ \rho^{\rm MP}(x)$ denoting the particular Mar\u{c}enko--Pastur density 
  $$
  \rho^{\rm MP}(x) = {2 \over \pi x^{1/2}} (1 - x)^{1/2} \chi_{0 < x < 1},
 $$
  we have
  \cite{PS11}
  $$
  \lim_{N \to \infty} 4  \rho_{(1)}^{\rm LUE}(4Ns) \Big |_{a=0}  =  \rho^{\rm MP}(s).
  $$
  Substituting and evaluating the integral gives  (\ref{MPa}).
  
  It remains to consider (\ref{MPb}). According to (\ref{H1X}), after changing variables $s \mapsto s - i \gamma/(2\sqrt{2N})$,
  then $s \mapsto 2 \sqrt{2N}$ in the integral therein,
  \begin{multline}\label{t.1}
  {1 \over N}   {\rm Cov } \, \Big ( \sum_{j=1}^N e^{ i k_1 \lambda_j},   \sum_{j=1}^N e^{ - i k_2  \lambda_j} \Big )^{(G)}
   \bigg |_{k_1 = i \gamma/ (2 \sqrt{2N}) + 2 \sqrt{2N} \tau \atop
   k_2 = - i \gamma/ (2 \sqrt{2N}) + 2 \sqrt{2N} \tau} \\
   = {2 \sqrt{2N} \over N} \int_0^\tau  H^{(L)}(2 \sqrt{2N}(s + i \gamma/8N, 2 \sqrt{2N}(s - i \gamma/8N)) \, ds.
      \end{multline}
  Set
  $$
  t_1 = 2 \sqrt{2N}(s + i \gamma/8N), \qquad  t_2 = 2 \sqrt{2N}(s - i \gamma/8N).
  $$
  We see from the final equality in (\ref{3.18}) and the definition (\ref{H}) that
    \begin{equation}\label{t.2} 
    H^{(L)}(t_1,t_2) = - {(2N)^{3/2} \over \gamma} e^{- t_1^2/4 - t_2^2/4} {\rm Im} \, \Big ( L_N^{(0)}(t_1^2/2)
     L_{N-1}^{(0)}(t_2^2/2) \Big ).
     \end{equation}
     
     Consider first the range  $0 < \tau < 1$. 
     Making use of the Plancherel-Rotach asymptotic formula (\ref{7.160p}) with $a=0, n = N, m=0,-1$,
     expanding the term proportional to $N$ in the cosine there according to
  \begin{multline*}    
  \Big ( \sqrt{X ( 1 - X)} + {\rm Arcsin} \, \sqrt{X} \Big ) \Big |_{X = s^2 + i s \gamma/4N} \\
  = s \sqrt{1 - s^2} + {\rm Arcsin} \, s + {i \gamma \over 4 N} (1 - s^2)^{1/2} + {\rm O} \Big ( {1 \over N^2} \Big ),
  \end{multline*}
  and making use too of a simple trigonometric identity shows
     \begin{multline}\label{t.3} 
     - {\rm Im} \, \Big ( L_N^{(0)}(t_1^2/2)
     L_{N-1}^{(0)}(t_2^2/2) \Big ) \\
     = {1 \over 4 \pi N s \sqrt{1 - s^2}}
     \bigg ( \cos \Big ( i \gamma (1 - s^2)^{1/2} - 2 {\rm Arccos} \, s \Big ) - \cos ( N v(s)) +  {\rm O} \Big ( {1 \over N} \Big ) \bigg ).
     \end{multline}
     The explicit form of the function $v(s)$ follows from the above working, however the only property we require is that it is
     bounded and linear in $s$ for small $s$. Substituting (\ref{t.3}) in (\ref{t.2}), then substituting the result in (\ref{t.1}), from
     the aforementioned property of $v(s)$, we see that the term involving $\cos N v(s)$ contributes ${\rm O}(1/N)$ to the integral
     relative to the other trigonometric term in (\ref{t.3}). Thus for large $N$ the LHS of (\ref{t.1}) equals
 \begin{equation}\label{t.4}   
 {2 \over \pi \gamma s \sqrt{1 - s^2}} {\rm Im} \, \int_0^\tau \cos \Big (    ( i \gamma (1 - s^2)^{1/2} - 2 {\rm Arccos} \, s \Big ) \, ds
 + {\rm O} (N^{-1}).
 \end{equation}
 Simplification gives the functional form in (\ref{MPb}). 
 
 The result (\ref{MPb}) for the range $\tau > 1$ follows from the result for $\tau = 1$, since for $s>1$ the appropriate version of
 the Plancherel-Rotach asymptotic formula \cite{Mo34} tells us that (\ref{t.3}) is exponentially small in $N$.

  \end{proof}
  
  \begin{remark}
 1.  The asymptotic formula of the final sentence in the above proof tells us that 
  the average in (\ref{g1}) for $\tau_g > 1$ tends to
  zero exponentially fast in $N$. Similarly,
  for $s > 1$ we know \cite{Fo12} that $ \rho_{(1)}^{\rm LUE}(4Ns)$ is exponentially
  small in $N$, and the limiting value of unity in (\ref{MPa}) is approached exponentially fast.
  The explicit form of the correction terms in this context are discussed in \cite{Ok19a}.
  In contrast,
 from G\"otze and Tikhomirov \cite{GT05}, we know that for any fixed value of the Laguerre parameter
 $a$,
 $$
 \sup_{x} \Big | 4 \int_0^x  \rho_{(1)}^{\rm LUE}(4Ns) ds - \int_0^x \rho^{\rm MP}(u) \, du \Big |
 \le {C \over N}
 $$
 for some $C > 0$, independent of $N$, which is furthermore optimal. Hence the limiting value in
 (\ref{MPa}) for $0 < \tau < 1$ is approached at a rate ${\rm O}(1/N)$. \\
 2. The working in \cite{Ok19} leading to (\ref{MPb}) did not make use of (\ref{gs}) --- although derived
 in the same paper --- but rather proceeded by extending heuristic working based on the double
 integral form the covariance (\ref{3.23a}), used previously in \cite{BH97,Li18}.
 \end{remark}
 
 \subsection{Soft edge scaling limits $k = \sqrt{2} N^{1/2} i \gamma$}
 The neighbourhood of the largest eigenvalue in the GUE gives rise to a well defined
 determinantal point process upon the change of variables $\lambda_j = \sqrt{2N} + x_j/(\sqrt{2} N^{1/6})$,
 and taking the limit $N \to \infty$.
 Motivating this choice are the facts that the largest eigenvalue to leading order is equal to
 $\sqrt{2N}$, with neighbouring eigenvalues separated on a scale of order $1/N^{1/6}$ \cite{Fo93a}.
 In the variables $\{x_j\}$, the eigenvalue density increases like $\sqrt{|x|}/\pi$ as $x \to -\infty$
 (see \cite[Eq.~(7.69)]{Fo10})
 which is consistent with the functional form of the Wigner semi-circle density (\ref{W}) at the edge,
 while there is a decay of leading order $\exp ( - 4 x^{3/2}/3 )$ for $x \to \infty$. This latter feature gives
 rise to the terminology of a soft edge. The asymptotics follow from the explicit form of the correlation
 kernel \cite{Fo93a},
 \begin{equation}\label{Ksoft}
 K^{\rm soft}(x,y) = {{\rm Ai} (x) {\rm Ai}'(y) - {\rm Ai} (y) {\rm Ai}'(x) \over x - y},
 \end{equation}
 which implies
  \begin{equation}\label{Ksoft1}
  \rho_{(1)}^{\rm soft}(x) = - x  ({\rm Ai}(x))^2 +  ({\rm Ai}'(x))^2 .
   \end{equation}
   
   Averages and covariances are defined in terms of the correlations as determined by (\ref{Ksoft}).
   In particular
   \begin{align}
   \langle e^{\gamma x} \rangle^{\rm soft} & = \int_{-\infty}^\infty e^{\gamma x} \rho_{(1)}^{\rm soft}(x) \, dx \label{Q1} \\
   {\rm Cov} \, (e^{\gamma_1 x}, e^{\gamma_2 y})^{\rm soft} & = \int_{\mathbb R^2} (e^{\gamma_1 x} - e^{\gamma_1 y})
   (e^{\gamma_2 x} - e^{\gamma_2 y}) \Big ( K^{\rm soft}(x,y) \Big )^2 \, dx dy \label{Q2}
   \end{align}
   which for convergence require that the parameters $\gamma, \gamma_1, \gamma_2$ be positive. Interest
   in the quantities (\ref{Q1}), (\ref{Q2}) first came about as generating functions relating to intersection numbers
   on the moduli space of certain families of algebraic curves \cite{Ok02}. Direct calculation based on an integral
   formula for (\ref{Ksoft})
    \begin{equation}\label{Ksoft2}
   K^{\rm soft}(x,y) =  \int_0^\infty {\rm Ai} (x +t)    {\rm Ai} (y +t) \, dt
   \end{equation}
   allows for the evaluations
     \begin{align}
     \langle  e^{\gamma x} \rangle^{\rm soft} & = { e^{\gamma^3/12} \over 2 \sqrt{\pi} \gamma^{3/2} } \label{Q3} \\
     {\rm Cov} \, (e^{\gamma_1 x}, e^{\gamma_2 y})^{\rm soft} & = {e^{(\gamma_1 + \gamma_2)^3/12} \over 2 \sqrt{\pi}
     (\gamma_1 + \gamma_2)^{3/2}} {\rm Erf} \, \Big ( {1 \over 2} \sqrt{\gamma_1 \gamma_2 (\gamma_1 + \gamma_2)} \Big ); \label{Q4}
   \end{align}
  see \cite{OS20} for a clear statement.
  
  Our point is that these can also be related to limits of the GUE average (\ref{kp3}) and covariance (\ref{H1X}). For example,
  from the eigenvalue scaling of the GUE specifying the soft edge, we must have, for $\gamma > 0$
  $$
  \langle e^{\gamma x} \rangle^{\rm soft} = \lim_{N \to \infty} e^{-ik \sqrt{2N}} 
   \int_{-\infty}^\infty e^{i k \lambda} \rho_{(1)}^{(G)}(\lambda) \, d \lambda  \bigg |_{k = - i \gamma \sqrt{2} N^{1/6}}.
   $$
   Hence, from (\ref{kp3}) and (\ref{Q3})
      \begin{equation}\label{Q5}
   { e^{\gamma^3/12} \over 2 \sqrt{\pi} \gamma^{3/2} }  = \lim_{N \to \infty} e^{- 2 \gamma N^{2/3}} e^{\gamma^2 N^{1/3}/2}
    L_{N-1}^{(1)}(-\gamma^2 N^{1/3}).
    \end{equation}
    Using an integral representation of the Laguerre polynomial, this identity (in an equivalent form) has been derived using a saddle
    point analysis by Br\'ezin and  Hikami \cite{BH07}.
    
  \subsection*{Acknowledgements}
	This research is part of the program of study supported
	by the Australian Research Council Centre of Excellence ACEMS.

\nopagebreak

\providecommand{\bysame}{\leavevmode\hbox to3em{\hrulefill}\thinspace}
\providecommand{\MR}{\relax\ifhmode\unskip\space\fi MR }
\providecommand{\MRhref}[2]{%
  \href{http://www.ams.org/mathscinet-getitem?mr=#1}{#2}
}
\providecommand{\href}[2]{#2}

   \end{document}